\theoremstyle{plain}
      \newtheorem{assumption}{\protect\assumptionname}
      \newtheorem{assumption}{\protect\assumptionname}[chapter]
\theoremstyle{remark}
      \newtheorem{rem}{\protect\remarkname}
      \newtheorem{rem}{\protect\remarkname}[chapter]
\theoremstyle{plain}
	    \newtheorem{thm}{\protect\theoremname}
      \newtheorem{thm}{\protect\theoremname}[chapter]
\theoremstyle{plain}
      \newtheorem{lem}{\protect\lemmaname}
      \newtheorem{lem}{\protect\lemmaname}[chapter]
\renewcommand{\subparagraph}{%
  \@startsection{subparagraph}{5}%
  {\z@}{3pt}{-1em}%
  {\indent\normalfont\normalsize\bfseries}%
}
\apptocmd\normalsize{%
\abovedisplayskip=5pt
 \belowdisplayskip=6pt
}{}{}
\newenvironment{manualtheorem}[1]{%
  \manualtheoreminner
}{\endmanualtheoreminner}
\newcommand{\indep}{\raisebox{0.05em}{\rotatebox[origin=c]{90}{$\models$}}}
\let\originalleft\left
\let\originalright\right
\renewcommand{\left}{\mathopen{}\mathclose\bgroup\originalleft}
\renewcommand{\right}{\aftergroup\egroup\originalright}
\providecommand{\assumptionname}{Assumption}
\providecommand{\lemmaname}{Lemma}
\providecommand{\remarkname}{Remark}
\providecommand{\theoremname}{Theorem}
\begin{document}
\title{A Theory Guide to Using Control Functions\\
to Instrument Hazard Models\thanks{I thank Christopher Palmer for his thoughtful advice and mentorship,
as well for introducing me to this topic through his job market paper.
At his request, I wrote this theory guide to serve as a complement
to Palmer (2023), which contains an empirical guide as well as a summary
of the theory introduced in this paper. I also thank Whitney Newey
and Jiulei Zhu for insightful comments and feedback. A previous version
of this paper was titled \textquotedblleft A Theory Guide: Instrumenting
a Discrete-Data Proportional Hazards Model with Control Functions.\textquotedblright{}
First version: September 2023.}\bigskip{}
}
\author{William Liu\thanks{Pre-doctoral research assistant at MIT/NBER, MIT Sloan, and Harvard;
\textsf{\protect\href{mailto:liuw@mit.edu}{liuw@mit.edu}}}\bigskip{}
}
\date{November 2023\bigskip{}
}
\maketitle
\begin{abstract}
\begin{singlespace}
{\normalsize{}I develop the theory around using control functions
to instrument hazard models, allowing the inclusion of endogenous
(e.g., mismeasured) regressors. Simple discrete-data hazard models
can be expressed as binary choice panel data models, and the widespread
Prentice and Gloeckler (1978) discrete-data proportional hazards model
can specifically be expressed as a complementary log-log model with
time fixed effects. This allows me to recast it as GMM estimation
and its instrumented version as sequential GMM estimation in a Z-estimation
(non-classical GMM) framework; this framework can then be leveraged
to establish asymptotic properties and sufficient conditions. Whilst
this paper focuses on the Prentice and Gloeckler (1978) model, the
methods and discussion developed here can be applied more generally
to other hazard models and binary choice models. I also introduce
my Stata command for estimating a complementary log-log model instrumented
via control functions (available as }\texttt{\textbf{ivcloglog}}{\normalsize{}
on SSC), which allows practitioners to easily instrument the Prentice
and Gloeckler (1978) model.}\\
\textbf{\normalsize{}}\\
\textbf{\normalsize{}Keywords}{\normalsize{}: Control Function, 2SRI,
Instrumental Variables, Hazard Model, Survival Model, Duration Model,
Proportional Hazards, Complementary Log-Log Model, Econometric Software,
Stata}{\normalsize\par}
\end{singlespace}

\noindent \textbf{\normalsize{}JEL Classification}{\normalsize{}:
C35, C36, C41, C87}{\normalsize\par}
\end{abstract}
\thispagestyle{empty} 

\newpage\setcounter{page}{1} 

\begingroup
\renewcommand*{\arraystretch}{1.5}

\section{Introduction\label{sec:intro}}

Hazard models (also known as survival models or duration models) specify
the rate that events occur conditional on the process history (i.e.,
which events have occurred or not occurred and when). They are most
commonly used in fields such as biostatistics, but have also found
purchase in economics and finance, such as in household finance, labor
economics, and health economics. Hazard models may also prove highly
useful to the rapidly growing new field of genoeconomics -- the use
of genes as instruments (also known as \textit{Mendelian randomization}
in biostatistics) to study economic contexts with instrumental variable
methods. This modeling approach underpins causal inference for phenomena
involving stochastic state transitions. Such phenomena include those
for which the randomness is inherent -- e.g., the duration of unemployment
spells is intrinsically random as a result of search-and-matching
-- and those for which there are unobserved factors modeled as random
shocks -- e.g., loan default can be modeled as a deterministic decision,
but one that is made based on random factors observed by the agent
but not the researcher. (See Kiefer, 1988, for some more examples
of economic applications.)

Like many models however, hazard models can suffer from endogeneity,
which can cause severe inconsistency and bias in parameter estimates.
This is a common problem in practice -- endogeneity can arise due
to measurement error in the regressors,\footnote{Strictly speaking, measurement error in the regressors is a form of
endogeneity. This is because the measurement error, with which the
mismeasured regressor is correlated, ends up in the composite error
term -- introducing measurement error in a regressor therefore causes
the mismeasured regressor to be correlated with the new, composite
error term.} simultaneity, and omitted variables. Unfortunately, instrumental
variable (IV) solutions to this endogeneity problem are non-trivial
when the instrumented model is nonlinear. The naive generalized version
of 2SLS, \textit{2SPS} -- two-stage predictor substitution, which
consists of replacing endogenous regressors with their predicted values,
is not generally valid in a nonlinear model. In this paper, I present
a solution to this problem that uses the control function approach
(also known in statistics as \textit{2SRI} -- two-stage residuals
inclusion), which can be thought of as an alternative generalization
of 2SLS.\footnote{The control function approach in a linear model is actually exactly
equivalent to 2SLS when the same first-stage regressions are used.}

I focus on the use of the control function approach in a popular \emph{discrete-data}
analog of the continuous-time Cox proportional hazards model -- the
Prentice and Gloeckler (1978) model -- though, the methods and discussion
developed here can easily be applied to other hazard models (and binary
choice models).\footnote{By ``discrete data'', I mean both \emph{grouped-time} data -- time-discretized
data, such as monthly panel data -- as well as true \emph{discrete-time}
data -- e.g, data on the cycles of some process.} Proportional hazards models are the most commonly used hazard models
because they can be \emph{distribution-free} (i.e., where the baseline
hazard is not assumed to follow any particular distribution) yet simple
to estimate.\footnote{Other common types of models are additive, accelerated failure time,
and proportional odds hazard models.} On the other hand, discrete-data hazard models are much less common
than continuous-time models, but are important because they appropriately
model the discrete nature of discrete data, which is by far the most
common format of data in economics and many other areas. In contrast,
continuous-time models that adjust for ties are just approximations
to discrete-data hazard models; these inherently introduce misspecification
bias. Obviating this misspecification bias motivates my focus on discrete-data
hazard modeling.

The structure of the paper is as follows. In Section \ref{sec:lit},
I contextualize the paper by explaining its place in and contribution
to both the literature on hazard models with endogenous regressors
as well as the literature on the control function approach. Section
\ref{sec:theory} presents an introduction to the model, a discussion
of the asymptotic theory (including the interpretation and selection
of the control function as well as robustness/asymptotic inefficiency),
and extensions of the core asymptotic theory. In Section \ref{sec:stata},
I introduce and explain my Stata command, \texttt{\textbf{\small{}ivcloglog}},
which allows practitioners to easily instrument the Prentice and Gloeckler
(1978) model via the control function approach. Finally, I conclude
in Section \ref{sec:concl} by summarizing the the main findings of
the paper. Appendix \ref{sec:proof} contains a theoretical proof
of Theorem \ref{thm:CAN}, and Appendix \ref{sec:primer} contains
a primer on grouped-time survival analysis.

\section{Literature Review\label{sec:lit}}

In this section, I first present a survey of the endogenous regressor
hazard model literature and then explain how my paper contributes
to the econometrics and statistics literatures.

\subsubsection*{Hazard Models with Endogenous Regressors}

Endogeneity is a familiar problem for statistical analyses, and the
problems that it can create for hazard models are well-known (Prentice,
1982). Yet, the literature for non-additive hazard modeling often
assumes exogenous regressors,\footnote{Additive hazard models, of course, are much easier to handle because
they can be expressed as linear regression models, and therefore can
be instrumented using 2SLS. For example, see J.\ Li et al.\ (2015),
Martinussen, Vansteelandt, et al.\ (2017), and Tchetgen Tchetgen
et al.\ (2015). (Also see Ying et al., 2019, for a paper that uses
control functions instead.)} and works that do allow for endogenous regressors typically only
allow for mismeasured regressors, e.g., Y.\ Li and Ryan (2004), Huang
and C. Y. Wang (2000, 2006, 2018), and Song and C.\ Y.\ Wang (2014).\footnote{In such contexts, repeated observations of the mismeasured regressors
are valid instruments. This is because, although they are subject
to measurement error, the measurement error terms for each of the
repeated observations are mutually independent (by assumption).} Works that allow for more general endogenous regressors -- ``true''
instrumental variable methods that go beyond just mismeasured regressors
-- are much less common. Some of these are: Bijwaard (2008), Kianian
et al.\ (2021), Loeys and Goetghebeur (2003), Loeys et al.\ (2005),
Martinussen, S{\o}rensen, and Vansteelandt (2019), and L.\ Wang
et al.\ (2022) -- see Choi and O'Malley (2017) for an overview of
a bunch of IV methods. However, none of these provide a solution comparable
in usefulness to that of 2SLS in linear models.\footnote{Another important (but less critical) point is that all of the mentioned
works only consider continuous-time models and are therefore not suitable
for discrete data.}

Many such papers make highly restrictive assumptions, severely reducing
their applicability. This often entails requiring that the instruments
and/or the treatments are binary variables. Indeed, Loeys and Goetghebeur
(2003) and Loeys et al.\ (2005) require both to be binary, and L.\
Wang et al.\ (2022) and Kianian et al.\ (2021) require the instruments
to be binary. The estimators in such works can also have some poor
properties. For example, consider Kianian et al.\ (2021), who apply
\emph{IPW} to survival analysis.\footnote{IPW stands for inverse probability weighting, inverse propensity weighting,
or inverse propensity-score weighting.} Developed by Horvitz and Thompson (1952), IPW is where the estimated
probability of each treatment status (``propensity score'') is used
to construct weights for observations, allowing the estimation of
statistics corresponding to a (pseudo-)population with certain characteristics.
For example, an average treatment effect can be calculated by comparing
the outcomes of treated and untreated populations. The basic (i.e.,
\emph{singly robust}) IPW methods involve using inverted propensity
scores as weights. Trivially, this means that they are very sensitive
to estimated propensity scores being close to or exactly zero --
this is the sample equivalent of the common support assumption being
violated. Propensity scores close to zero result in extremely large
weights, causing instability in the IPW estimates. This problem is
pervasive enough to have motivated the development of \emph{stabilized
IPW}, where the weights are normalized or regularized (Robins, Hern\'an,
Brumback, 2000), and \emph{doubly robust} methods (i.e., AIPW --
\emph{augmented IPW}), where \emph{outcome regression} (i.e., ``regular''
regression methods) is combined with IPW. (See Kang \& Schafer, 2007,
for an overview of the latter.) Unfortunately, Kianian et al.\ (2021)
use the Abadie (2003) IPW-type estimator, which is not stabilized,
hampering their estimator's usefulness.

Much less restrictive proposals are given by Bijwaard (2008) and Martinussen,
S{\o}rensen, and Vansteelandt (2019). Martinussen, S{\o}rensen,
and Vansteelandt (2019) seem to have independently discovered essentially
the same idea as Bijwaard (2008), so I will focus on discussing only
the latter. Bijwaard (2008) is a seminal work that introduces his
\emph{IVLR} (``Instrumental Variable Linear Rank'') estimator; an
excellent summary is Bijwaard (2009).\footnote{Specifically, the IVLR estimator covers the \emph{GAFT} (generalized
accelerated failure time) model, which nests as special cases the
\emph{AFT} (accelerated failure time) model as well as the Cox model
and the \emph{MPH} (mixed proportional hazard) model. (The MPH model
is just a Cox model with ``frailty'' terms -- ``random effects''
in economics terminology.)} ``Linear rank'' refers to a rank test, an equivalent of the score
test for coefficient significance in the Cox model. Ranks pop up here
because they appear in the partial likelihood for a Cox model and
in the likelihoods for other continuous-time models. The idea behind
IVLR is that the instruments should not affect the hazard and that,
based on this fact, a vector of rank test statistics will be centered
at zero and asymptotically normal when evaluated at the true coefficient
parameter. We can then get the coefficient estimates using what Bijwaard
(2008) calls the ``inverse rank estimation'' approach of Tsiatis
(1990): using the rank test statistics as moment functions by setting
the sample versions of them to zero.\footnote{Bijwaard (2008) and Tsiatis (1990) refer to these moment functions
as \textquotedbl estimating equations\textquotedbl .}

Bijwaard (2009) points out the IVLR estimator is similar to GMM and
that there is a choice of weighting matrix. However, his IVLR estimator
is not just similar to a GMM estimator -- it is a GMM estimator.
I point this out because the ability to choose the weighting matrix
means that IVLR is actually an \emph{IV-GMM} estimator for hazard
models. In particular, it implies that the standard properties for
IV-GMM estimators carry over to his IVLR estimator. Couching Bijwaard's
(2008) IVLR estimator in the IV-GMM framework allows us to understand
it much more deeply. He specifically proposes that IVLR be implemented
as \emph{efficient IVLR}, where (an estimate of) the asymptotically
efficient weighting matrix is used -- this makes \emph{efficient
IVLR} a \emph{feasible efficient IV-GMM} estimator.

However, using feasible efficient IV-GMM entails robustness being
sacrificed for the sake of inferential efficiency: it is well-known
that there is an inherent tradeoff between efficiency and robustness
for IV-GMM and that feasible efficient IV-GMM sacrifices the latter
for the former. Therefore, I argue that an alternative instrumental
variables technique for hazard models, more robust but less efficient
than efficient IVLR, would be more practical, applicable, and attractive
for empirical studies. After all, this tradeoff is especially well
studied in linear models, where a strong preference for robustness
over efficiency has developed, as evidenced by the fact that 2SLS
is much more popular than feasible efficient IV-GMM. Thus, I propose
using the control function approach instead, based on the fact that
it satisfies these criteria -- standard results in the literature.

\subsubsection*{The Control Function Approach}

This comprehensive theory guide is the first to provide a complete
and rigorous theoretical foundation for the use of the control function
approach to instrument hazard models, undergirding existing work.
The first proposals on using the control functions in hazard models
were made by Terza et al.\ (2008), Atiyat (2011), and Palmer (2013):
Terza et al.\ (2008) and Atiyat (2011) proposed using the control
function approach to instrument continuous-time hazard models, and
Palmer (2013) proposed using the control function approach to instrument
the Prentice and Gloeckler (1978) discrete-data model. Based on Palmer
(2013), multiple applied works have used control functions to instrument
the discrete-data Prentice and Gloeckler (1978) model (Coviello et
al., 2015; Ganong \& Noel, 2022; Liebersohn \& Rothstein, 2023; Palmer,
2023). Similarly, inspired by Terza et al.\ (2008) and Atiyat (2011),
there are also a number of applied works that have used control functions
to instrument continuous-time models (e.g., Gore et al., 2010). Yet,
a critical and pressing problem is that a complete theoretical justification
is missing from the literature, as pointed out by Palmer (in personal
communication) and by others in the literature (Choi \& O'Malley,
2017).

Atiyat (2011) is a theory paper that explores the use of control functions
in (continuous-time) hazard models, but ultimately does not provide
a theoretical justification for their use. Specifically, he assumes
that control functions can be used to instrument endogenous regressors
but does not justify this or investigate when this is valid. Instead,
Atiyat (2011) points towards Terza et al.\ (2008) for justification
of this assumption

Unfortunately, Terza et al.\ (2008) provide little justification
themselves. Firstly, they make the assumption that\emph{ }\textit{\emph{the
first-stage residuals exactly equal any relevant unobserved confounders}},
which is an extremely strong assumption that cannot be justified in
empirical applications! Secondly, their theoretical discussion of
the control function approach is very brief and incomplete. In particular,
they do not derive or present critical conditions that are required
for the control function approach to be generally valid, such as what
I call the \textit{instrument conditional independence} assumption
-- the control function equivalent of the 2SLS \textit{exclusion
restriction}!\footnote{The instrument conditional independence assumption is a generic requirement.
An exception is that, when the second stage is probit, a weaker, conditional
mean restriction can be used instead (Rivers \& Vuong, 1988).} As another example, they do not prove how and under what conditions
(e.g., the Lindeberg condition) their discussion would carry over
to survival contexts.

Palmer (2013) proposed the use of the control function approach in
discrete-data hazard models and utilized this to study loan default.
However, the theoretical justification was heuristic and the task
of developing rigorous theory on instrumenting hazard models was left
as future work. This paper serves to fill in that gap in the literature.
Moreover, this paper serves as a theory guide that complements and
theoretically justifies Palmer (2023), which is an updated version
of Palmer (2013) that also contains an empirical guide as well as
a summary of the theory introduced in this paper.

This paper acts as a cornerstone for existing and future literature.
It provides a rigorous theoretical basis for and advances the theory
around the use of the control function approach to instrument hazard
models, contributing to a nascent literature and validating multiple
existing empirical studies. In addition, I also advance and broaden
the theory around the use of control functions in discrete choice
models -- after all, discrete-data hazard models can nominally be
expressed as discrete choice models (Allison, 1982; Jenkins, 1995).\footnote{I focus on the Prentice and Gloeckler (1978) discrete-data proportional
hazards model, but the methods and discussion developed here can be
easily applied more generally to other hazard models and discrete
choice models by modifying the primary model and associated \textit{control
function specification} assumption.} Existing theory papers on this mainly have a narrow focus on probit
or Poisson regression due to convenient properties that are not found
elsewhere (e.g., see Wooldridge, 2015); this paper relaxes the dependence
on these properties. I also bring formal theoretical clarification
to the Prentice and Gloeckler (1978) model by pointing out some of
the unstated assumptions, such as the Lindeberg condition.

\section{Theory\label{sec:theory}}

In this section, I introduce the version of the Prentice and Gloeckler
(1978) discrete-data proportional hazards model instrumented via the
control function approach. I discuss the Prentice and Gloeckler (1978)
model specifically because it is the most popular discrete-data proportional
hazards model and has certain advantages. Namely, it is simple, parsimonious,
and easy to implement, and it is also \emph{flexibly parametric} --
i.e., it is a distribution-free parametric model. Specifically, the
Prentice and Gloeckler model estimates the baseline hazard with a
stepped function, and this behavior is represented by the time fixed
effects.\footnote{The model is not semi-parametric, but is similar in that it is distribution-free.
On this basis, Han and Hausman (1990) also refer to flexibly parametric
estimation as \textquotedblleft nonparametric\textquotedblright{}
in a broad sense, but this usage has not been popularized.}

I then present sufficient conditions for consistent estimation and
inference as well as a consistent estimator of the asymptotic variance.
I also discuss practical considerations, such as the interpretation
and selection of the control function and the robustness/asymptotic
inefficiency of the estimator. After these is a discussion of extensions
to multiple/transformed/discrete first-stage dependent variables.

\subsection{The IV Prentice and Gloeckler (1978) Model with Control Functions\label{subsec:theory-modelintro}}

\subsubsection*{Model Overview}

$l_{1}$ and $l_{2}$ denote the log-likelihoods corresponding to
the first and second stages, respectively; their arguments are suppressed
for brevity. $y_{it}$ denotes the outcome; here, it is an indicator
for event occurrence -- a dummy variable that is 1 in the time period
a failure occurs and 0 otherwise. $x_{it}$ denotes the continuous
endogenous regressor. Let $z_{it}\coloneqq(z_{1it}',z_{2it}')$',
where $z_{1it}\coloneqq(\varphi_{t}',\mathfrak{z}_{it}')$ is a transposed
row vector of exogenous regressors and $z_{2it}$ is a transposed
row vector of ``excluded instruments'' (i.e., ``real'' instruments
that aren't the control variables $z_{1it}$). $\varphi_{t}$ represents
time-period dummies, and $\mathfrak{z}_{it}$ represents other added
control variables. As per Prentice and Gloeckler (1978), we implicitly
assume that all regressors and instruments are constant within each
time period, which is reflected in the notation.

The primary model is
\[
\begin{aligned}y_{it} & =\mathbf{1}(z_{1it}'\beta_{z}+\beta_{2}x_{it}+u_{1it}>0)\\
 & \equiv\mathbf{1}(\varphi_{t}'\psi+\mathfrak{z}_{it}'\beta_{1}+\beta_{2}x_{it}+u_{1it}>0)\equiv\mathbf{1}(\psi_{t}+\mathfrak{z}_{it}'\beta_{1}+\beta_{2}x_{it}+u_{1it}>0),
\end{aligned}
\]
where $u_{it}$ is a composite error term with some unknown distribution.
$u_{it}$ represents a combination of unobserved heterogeneity that
is potentially correlated with $x_{it}$ (which could induce omitted-variable
bias) and the ``true'' error term, which has a standard Gumbel distribution.
The primary model here is just the equivalent nominal cloglog (\emph{complementary
log-log}) representation of the Prentice and Gloeckler (1978) grouped-time
proportional hazard model, albeit in a latent variable form.\footnote{\label{fn:errordistisGLMlink}Any Bernoulli model with $\Pr(y_{1}=1|x)=F_{-\upsilon}(x'\beta)$
for some function $F_{-\upsilon}$ has a latent index representation$y_{1}=\boldsymbol{1}(y_{1}^{*}>0)$,
where $y_{1}^{*}\coloneqq x'\beta+\upsilon$ and $\upsilon$ has CDF
$F_{\upsilon}(\cdotp)$ satisfying $F_{\upsilon}(a)\equiv1-F_{-\upsilon}(-a)$.
In other words, we simply construct $\upsilon$ such that $F_{-\upsilon}(\cdotp)$
is the CDF of $-\upsilon$. This latent index representation is mechanically
identical to the original Bernoulli model because $\Pr(y_{1}=1|x)=\Pr(y_{1}^{*}>0|x)=\Pr(-\upsilon<x'\beta|x)=F_{-\upsilon}(x'\beta).$
In the cloglog model, we have $F_{-\upsilon}(a)=1-e^{-e^{a}}$, which
gives us $F_{\upsilon}(a)=e^{-e^{-a}}$, the CDF of the standard Gumbel
distribution. ($\upsilon$ and $a$ are placeholder variables.)}\textsuperscript{,}\footnote{The Prentice and Gloeckler (1978) model has an equivalent nominal
cloglog representation where we nominally treat each observation as
IID. In general, discrete-time hazard models are identical to a discrete
choice model with (1) an appropriate link function, (2) an indicator
of the event being the outcome, (3) the regressors augmented by time
fixed effects, and (4) with all observations for each entity only
being kept until ``failure'' or censoring (Allison, 1982; Jenkins,
1995). However, the observations are not truly IID: observations for
entity $i$ are cut off after ``failure''. In this sense, the model
is not a true cloglog model.} $\psi_{t}$ is a time-period fixed effect, and $(\beta_{1}',\beta_{2})'$
is the parameter vector of interest.

We can introduce the decomposition
\[
u_{1it}\eqqcolon c(v_{it})+e_{it},
\]
where we assume that the function $c(\cdot)$ is a function to be
estimated, and that $v_{it}$ is the error of the auxiliary model
\[
x_{it}=z_{it}'\pi+v_{it}\equiv z_{1it}'\pi_{1}+z_{2it}'\pi_{2}+v_{it},
\]
where $\pi\coloneqq(\pi_{1}',\pi_{2}')'$. The decomposition just
represents what happens with the new error term when the control function
$c(v_{it})$ is introduced to the second stage model. For the purposes
of this paper, we will assume that $c(v_{it})$ is a $Q$-th order
polynomial with
\[
c(v_{it})=p(v_{it})'\beta_{3},
\]
where the vector of control function terms $p(v_{it})$ is defined
by $p(v_{it})\coloneqq(v_{it},v_{it}^{2},\cdots,v_{it}^{Q})'$.\footnote{For expositional convenience, $p(v_{it})$ is defined as not containing
a $1$ because $c(v_{it})$ would otherwise contain a constant, making
it perfectly collinear with the time-period fixed effects (if all
time-period fixed effects are included).} Using the decomposition, we have
\[
y_{it}=\mathbf{1}(\psi_{t}+\mathfrak{\zeta}_{it}'\beta+e_{it}>0)\equiv\mathbf{1}(\psi_{t}+\mathfrak{z}_{it}'\beta_{1}+\beta_{2}x_{it}+p(v_{it})'\beta_{3}+e_{it}>0),
\]
where $\mathfrak{\zeta}_{it}\coloneqq(\mathfrak{z}_{it}',x_{it},p(v_{it})')'$
and $\beta\coloneqq(\beta_{1}',\beta_{2}',\beta_{3}')'$. Also, let
$\gamma\coloneqq(\psi',\beta')'.$

Under certain conditions,\footnote{The inclusion of all non-collinear second-stage controls in the first
stage is important not just for better efficiency but also robustness.
Imagine adding some exogenous regressor $\tilde{z}_{1it}$ to the
second stage but not the first stage. Since it is not in $z_{it}$,
there is no guarantee that $e_{it}|x_{it},z_{1it},\tilde{z}_{1it},p(v_{it})\sim\mathrm{Gumbel}(0,1)$
-- see Appendix \ref{sec:proof}. This is just like in 2SLS, where
including the exogenous regressor $\tilde{z}_{1it}$ in the first
stage guarantees that it will be orthogonal to $v_{it}$. (In 2SLS,
$v_{it}$ shows up in the new, composite second-stage error term.)
Obviously, we do not need to add $\tilde{z}_{1it}$ to the first stage
(which would entail dropping some instruments collinear with $\tilde{z}_{1it}$)
if $\tilde{z}_{1it}$ is perfectly collinear with the first-stage
variables $z_{1it}$: after all, $v_{it}$ would be exactly the same
either way. More formally, using the fact that $(x_{it},z_{it})\indep e_{it}|p(v_{it})$
and the perfect collinearity of $\tilde{z}_{1it}$ with $z_{it}$,
we have that $e_{it}|x_{it},z_{1it},p(v_{it})=e_{it}|x_{it},z_{it},p(v_{it})=e_{it}|x_{it},z_{it},\tilde{z}_{1it},p(v_{it})=e_{it}|x_{it},z_{1it},\tilde{z}_{1it},p(v_{it})$.
Therefore, when $\tilde{z}_{1it}$ is perfectly collinear with $z_{1it}$,
we still have that $e_{it}|x_{it},z_{1it},\tilde{z}_{1it},p(v_{it})\sim\mathrm{Gumbel}(0,1)$
even if $\tilde{z}_{1it}$ is omitted from the second stage.}
\[
e_{it}|x_{it},z_{1it},p(v_{it})\sim\mathrm{Gumbel}(0,1).
\]
This just means that the primary model is a cloglog model when the
control function terms $p(v_{it})$ are included alongside the original
second-stage regressors $x_{it}$ and $z_{1it}$. Therefore, this
implies that including the terms of $p(v_{it})$ as controls allows
us to recover an exogenous Prentice and Gloeckler (1978) model when
the base assumptions for it are satisfied. Consequently, if $p(v_{it})$
were known, we could just do that to get consistent estimates of $\beta_{1}$
and $\beta_{2}$, but we cannot because $v_{it}$ is unobserved.\footnote{Strictly speaking, just like other common binary outcome models, we
are actually estimating the scaled parameters (the ratio of the unscaled
parameters to the standard deviation of $e_{it}$) -- the unscaled
are not identified (Wooldridge, 2010). After all, $y_{it}^{*}$ has
a purely ordinal meaning, so the scale of $\beta$ and $e_{it}$ in
$y_{it}^{*}=\psi_{t}+\mathfrak{\zeta}_{it}'\beta+e_{it}$ is entirely
arbitrary. The convention for probit and logit is to set the variance
of $e_{it}$ to 1, causing $e_{it}$ have a standard normal/logistic
distribution and the scaled parameters to equal the unscaled. However,
the convention for cloglog is instead to set the variance of $e_{it}$
to $\sfrac{\pi^{2}}{6}$, which is the variance of the standard Gumbel
distribution. Technically, this means that cloglog point estimates
do not truly estimate $\gamma$, but this does not matter (except
in simulations).} Instead, if we have an estimate $\hat{\pi}$ of the auxiliary model
parameters, we can consider substituting $v_{it}$ with $\hat{v}_{it}\coloneqq x_{it}-z_{it}'\hat{\pi}$
when estimating.\footnote{Strictly speaking, $\hat{v}_{it}$ is not an estimate of $v_{it}$
because $v_{it}$ is not a parameter.} To see why, first substitute the unobserved variable $v_{it}$ for
the observed variables in $z_{it}$ and unknown parameter vector $\pi$
using the auxiliary model. This gives us
\[
y_{it}=\mathbf{1}(\psi_{t}+\mathfrak{z}_{it}'\beta_{1}+\beta_{2}x_{it}+p(x_{it}-z_{it}'\pi)'\beta_{3}+e_{it}>0).
\]
This is a classic two-step Z-estimation setup: were it not for $\pi$
being unknown, we would be able to directly estimate $\psi_{t}$ and
$\beta$ using MLE.\textbf{}\footnote{\textit{M-estimation theory} refers to extremum estimation theory
in general, whereas \textit{Z-estimation theory} (coined by Van der
Vaart, 1998) is a special case where first-order condition optimization
is possible. Classical GMM, such as Hansen (1982) or Newey and McFadden
(1994), covers ergodic or IID data, and its theorems are special cases
of finite-dimensional Z-theorems. \textit{Non-classical GMM} simply
refers to finite-dimensional Z-theorems that use different CLTs to
achieve analogous results. One famous example of a finite-dimensional
Z-theorem is Theorem 3.3 in Pakes and Pollard (1989), which is a slightly
more general version of Theorem 3 in Huber (1967) that allows the
sample moment functions to be non-smooth and even discontinuous. It
forms the theoretical basis of the \textit{Simulated Method of Moments}
(SMM), also known as the \textit{Method of Simulated Moments} (MSM),
where moment functions can be simulated to avoid the intractability
or non-existence of an analytical form.} Under appropriate assumptions, we can show that we can replace $\pi$
with $\hat{\pi}$ and instead do \emph{QMLE} -- quasi-maximum likelihood
estimation.\footnote{Technically, this is specifically \emph{2SCMLE} -- two-stage conditional
maximum likelihood estimation (Rivers \& Vuong, 1988).}

Notice that, just like in sequential classical GMM (Newey, 1984; Newey
\& McFadden, 1994), the sequential nature of this estimator means
that we can ``stack'' the moment functions of the first and second
stages (namely, the first-stage score and second-stage quasi-score).
This is because the just-identification of both stages implies that
the vector of moment conditions in each can be solved for exactly,
implying that combining them into one vector of moment conditions
yields the exact same estimator.\footnote{``Just-identification'' is used here in the general sense (one moment
condition for each parameter) rather than the more common meaning
for IV (the numbers of ``excluded instruments'' and endogenous regressors
being equal).} In other words, we can treat the estimator like a one-step Z-estimator,
allowing us to apply a Z-theorem to find sufficient conditions and
get results.

\subsubsection*{Moment Functions}

Let $\theta\coloneqq(\pi',\psi',\beta')'$. The moment functions are
\[
\hat{m}(\theta)\coloneqq\left(\begin{array}{c}
\hat{m}_{1}(\theta)\\
\hat{m}_{2}(\theta)
\end{array}\right)\equiv\left(\begin{array}{c}
\frac{1}{n}\stackrel[i=1]{n}{\sum}g_{1i}(\theta)\\
\frac{1}{n}\stackrel[i=1]{n}{\sum}g_{2i}(\theta)
\end{array}\right)\equiv\frac{1}{n}\stackrel[i=1]{n}{\sum}g_{i}(\theta),
\]
where
\[
\hat{m}_{1}(\theta)\equiv\hat{m}_{1}(\pi)\coloneqq\frac{1}{n}\frac{\partial l_{1}}{\partial\pi}\equiv\frac{1}{n}\stackrel[i=1]{n}{\sum}\stackrel[t=1]{T}{\sum}z_{it}'v_{it}\equiv\frac{1}{n}\stackrel[i=1]{n}{\sum}\underset{g_{1i}(\theta)}{\underbrace{\stackrel[t=1]{T}{\sum}z_{it}'(x_{it}-z_{it}'\pi)}}
\]
and\footnote{Strictly speaking, Prentice and Gloeckler (1978) actually use the
unnormalized score (i.e., $n\times\hat{m}(\theta)$) in their paper.
I normalize by $n$ here to emphasize the stochastic convergence of
the score.}
\[
\hat{m}_{2}(\theta)\equiv\hat{m}_{2}(\psi,\beta,\pi)\coloneqq\frac{1}{n}\left(\begin{array}{c}
\frac{\partial l_{2}}{\partial\psi_{1}}\\
\vdots\\
\frac{\partial l_{2}}{\partial\psi_{T}}\\
\frac{\partial l_{2}}{\partial\beta_{1}}\\
\vdots\\
\frac{\partial l_{2}}{\partial\beta_{K}}
\end{array}\right)\equiv\frac{1}{n}\left(\begin{array}{c}
\frac{\partial}{\partial\psi_{1}}\stackrel[i=1]{n}{\sum}l_{2i}\\
\vdots\\
\frac{\partial}{\partial\psi_{T}}\stackrel[i=1]{n}{\sum}l_{2i}\\
\frac{\partial}{\partial\beta_{1}}\stackrel[i=1]{n}{\sum}l_{2i}\\
\vdots\\
\frac{\partial}{\partial\beta_{K}}\stackrel[i=1]{n}{\sum}l_{2i}
\end{array}\right)\equiv\frac{1}{n}\stackrel[i=1]{n}{\sum}\underset{\mathclap{\coloneqq g_{2i}(\theta)\equiv\frac{1}{T}\stackrel[t=1]{T}{\sum}g_{2it}(\theta)}}{\underbrace{\left(\begin{array}{c}
\frac{\partial l_{2i}}{\partial\psi_{1}}\\
\vdots\\
\frac{\partial l_{2i}}{\partial\psi_{T}}\\
\frac{\partial l_{2i}}{\partial\beta_{1}}\\
\vdots\\
\frac{\partial l_{2i}}{\partial\beta_{K}}
\end{array}\right)}},
\]
where $n$ is the number of entities (indexed by $i$), $T$ is the
number of time periods, and $K$ is the number of regressors. The
elements of $g_{2i}(\theta)$ are defined by
\[
\frac{\partial l_{2i}}{\partial\psi_{t}}=\begin{cases}
-h_{it} & t<s\\
\delta_{i}b_{it} & t=s\\
0 & t>s
\end{cases};\qquad\frac{\partial l_{2i}}{\partial\beta_{k}}=\mathfrak{\hat{\zeta}}_{is,k}\left(\delta_{i}b_{is}-\stackrel[j=1]{s-1}{\sum}h_{ij}\right),
\]
where $\delta_{i}$ is the censoring indicator, $t$ represents a
time period, $s$ is the time period of survival (with subscript $i$
suppressed for clarity), $\mathfrak{\hat{\zeta}}_{it,k}$ is the $k$-th
element of $\mathfrak{\hat{\zeta}}_{it}$ (i.e., the observation for
the $k$-th regressor, excluding the time fixed effects, for entity
$i$ and time $t$$)$, and where
\[
b_{is}\coloneqq\frac{h_{it}e^{-h_{it}}}{1-e^{-h_{it}}};\qquad h_{it}\coloneqq e^{\psi_{t}+\mathfrak{\zeta}_{it}'\beta}.
\]
$\hat{m}_{1}(\theta)$ is a score function, but $\hat{m}_{2}(\theta)$
is technically a quasi-score function because $p(\hat{v}_{it})$ is
a vector of generated regressors. $g_{1i}(\theta)$ is the entity-specific
first-stage score contribution,$g_{2i}(\theta)$ is the entity-specific
second-stage quasi-score contribution, and $g_{2it}(\theta)$ is the
observation-specific second-stage quasi-score contribution; $g_{i}(\theta)\coloneqq(g_{1i}(\theta)',g_{2i}(\theta)')'$.
$\hat{\theta}\coloneqq(\hat{\pi}',\hat{\psi}',\hat{\beta}')'$ is
the solution of
\[
\hat{m}(\hat{\theta})=0.
\]

\subsection{Root-$n$ Asymptotic Theory\label{subsec:theory-asymptotics}}

\textbf{}Here, I present assumptions that are sufficient to infer
that standard GMM asymptotic properties hold, such as consistency
and asymptotic normality for $\hat{\theta}$.
\begin{assumption}[Basic Setup]
\label{assu:setup}\hfill
\begin{itemize}
\item Regressors and instruments are constant within time period.
\item Strict exogeneity.
\item Censoring and failure are independent.
\item Observations are independent across individuals.
\item The true parameter is finite-dimensional.
\end{itemize}
\end{assumption}
\begin{rem}
Strict exogeneity is just a standard, fundamental assumption for hazard
models to ensure identifiability. It simply requires that only the
contemporaneous values of variables matter, and this has already been
assumed, given that it is an inherent part of the structure of the
model introduced in Section \ref{subsec:theory-modelintro}.
\end{rem}
\begin{assumption}[Full Rank Regressors]
\label{assu:rank}\hfill
\begin{itemize}
\item \textbf{Instrument Rank Condition:} $\mathrm{rank}(\mathbb{E}\left[z_{it}z_{it}'\right])=L_{Z}$,
where $L_{Z}$ is the (fixed) number of instruments, including the
exogenous regressors.
\item \textbf{Primary Model Rank Condition:} $\mathrm{rank}(\mathbb{E}\left[\xi_{it}\xi_{it}'\right])=L_{\Xi}$,
where $L_{\Xi}$ is the (fixed) number of augmented second-stage regressors
(i.e., including the control function terms). Note that two simple
necessary conditions (which in particular are necessary for $v_{it}$
to not be perfectly collinear with other regressors) are:
\begin{itemize}
\item Instrument Relevance Condition: $\pi_{2}\neq0$. (At least one of
the ``excluded instruments'' must be relevant.)
\item Instrument Order Condition: $L_{Z}\geq L_{X}$, where $L_{X}$ is
the (fixed) number of original second-stage regressors. (The number
of instruments must equal or exceed the number of original regressors,
i.e., there must be at least one ``excluded instrument''.)
\end{itemize}
\end{itemize}
\end{assumption}
\begin{rem}
The instrument and primary model rank conditions are equivalent to
saying that the expected first-stage and second-stage Hessians, evaluated
at the true parameter, are invertible. In particular, the instrument
rank condition is required for $\pi_{2}$, the first-stage coefficients
on the excluded instruments, to be identified -- it embodies the
IV relevance condition. In addition, given the instrument rank condition,
the primary model rank condition guarantees the identification of
the second-stage coefficients $\gamma$. (Note that both conditions
in particular require the number of observations to be at least (1)
the number of first-stage regressors and (2) the number of second-stage
regressors.) Assumption \ref{assu:rank} does not imply that the estimator
always exists but rather that the probability of it existing converges
to one as the sample size increases.
\end{rem}
\begin{assumption}[Instrument Conditional Independence]
\label{assu:condindep}$u_{it}$ and $z_{it}$ are independent conditional
on $v_{it}$. A sufficient condition is that $(u_{it},v_{it})$ and
$z_{it}$ are independent. (Note that this sufficient condition requires
that $x_{it}$ be a continuous variable.)
\end{assumption}
\begin{rem}
Instrument conditional independence is the control function equivalent
of the 2SLS \textit{\emph{exclusion restriction.}}
\end{rem}
\begin{assumption}[Control Function Specification]
\label{assu:CFspec}$u_{it}=c(v_{it})+e_{it}$ with $e_{it}|v_{it}\sim\mathrm{Gumbel}(0,1)$,
where we will assume for the purposes of this paper that the control
function $c(v_{it})$ is a $Q$-th order polynomial with $c(v_{it})\coloneqq p(v_{it})'\beta_{3}$,
where the vector of control function terms $p(v_{it})$ is defined
by $p(v_{it})\coloneqq(v_{it},v_{it}^{2},\cdots,v_{it}^{Q})'$.
\end{assumption}
\begin{rem}
Note that $e_{it}|p(v_{it})\equiv e_{it}|v_{it}$, and therefore $e_{it}|p(v_{it})\sim\mathrm{Gumbel}(0,1)$,
because $p(v_{it})$ depends only on $v_{it}$ and contains $v_{it}$
as one of its elements. Thus, this assumption just means that $c(\cdot)$
has a sufficiently flexible function form so as to make $e_{it}$
have a standard Gumbel distribution conditional on $p(v_{it})$. (That
this holds perfectly is typically not true in practice, but the point
is that we want a sufficiently good approximation.)
\end{rem}
\begin{rem}
\label{rem:distlinkhazardequiv}The assumption that the $e_{it}$,
the error term in the latent variable representation of $y_{it}$,
has a standard Gumbel distribution (conditional on $v_{it}$) is exactly
equivalent to assuming that the second-stage model is cloglog.\footnote{Recall that the assumption of the (conditional) distribution of the
error term in the latent variable representation of any binary choice
model is exactly equivalent to the assumption of the GLM link function
for that binary choice model. For more details, see Footnote \ref{fn:errordistisGLMlink}
in Section \ref{subsec:theory-modelintro}.} Therefore, given that the Prentice and Gloeckler (1978) discrete-data
proportional hazards model has an equivalent nominal cloglog representation,
the combination of Assumptions \ref{assu:condindep} and \ref{assu:CFspec}
is essentially a proportional hazards assumption. More specifically,
they imply that we have proportional hazards after including the control
function terms as control variables.
\end{rem}
\begin{assumption}[Regularity Conditions]
\label{assu:regularity}\hfill
\begin{itemize}
\item \textbf{Lindeberg Condition:} The Lindeberg condition holds for the
individual-specific second-stage quasi-score contributions. Using
notation introduced later, this means the elements of $g_{2i}(\theta)$,
where $g_{2i,j}(\theta)$ represents the $j$-th element, have finite
mean and variance and satisfy for all $\epsilon>0$: 
\[
\underset{n\rightarrow\infty}{\lim}\frac{1}{s_{n}^{2}}\mathbb{E}\left[\mathrm{Var}(g_{2i,j}(\theta))\cdot\boldsymbol{1}\left(\left|g_{2i,j}(\theta)-\mathbb{E}\left[g_{2i,j}(\theta)\right]\right|>\epsilon s_{n}\right)\right]=0,
\]
where $s_{n}^{2}\coloneqq\stackrel[j=1]{n}{\sum}\mathrm{Var}(g_{2i,j}(\theta))$.
\item \textbf{Parameter Space Compactness:} The true coefficient parameter
must be in the interior of a compact parameter space. A sufficient
condition is that it is a real-valued vector.
\end{itemize}
\end{assumption}
\begin{thm}[Consistency and Asymptotic Normality of Control Function Hazard Model
Estimator]
\label{thm:CAN}Under Assumptions \ref{assu:setup}, \ref{assu:rank},
\ref{assu:condindep}, \ref{assu:CFspec}, and \ref{assu:regularity},
the (quasi-)maximum likelihood estimates $\hat{\theta}$ of $\theta$
will be consistent and asymptotically normal with known variance $V\coloneqq G^{-1}\Omega(G^{-1})'$,
where
\[
\sqrt{n}(\hat{\theta}-\theta_{0})\xrightarrow{d}N(0,G^{-1}\Omega(G^{-1})'),
\]
with\footnote{$\nabla_{\mkern-4mu \theta}$ is the gradient w.r.t.\ $\theta$,
but is defined according to the standard convention in economics:
that the derivative w.r.t.\ $\theta$ has the same dimensions as
$\theta$ and that the gradient w.r.t.\ $\theta$ has transposed
dimensions (which is the opposite of the standard convention in mathematics).}
\[
G\coloneqq\mathbb{E}\left[\nabla_{\mkern-4mu \theta}\,g_{i}(\theta_{0})\right];\qquad\Omega\coloneqq\mathbb{E}\left[g_{i}(\theta_{0})g_{i}(\theta_{0})'\right].
\]
A consistent estimator of $V\equiv G^{-1}\Omega(G^{-1})'$ is $\widehat{V}\coloneqq\widehat{G}^{-1}\widehat{\Omega}(\widehat{G}^{-1})'$,
where $\widehat{\Omega}$ can be any consistent estimator of $\Omega$
and $\widehat{G}$ is the consistent estimator of $G$ defined by
$\widehat{G}\coloneqq\frac{1}{n}\sum_{i=1}^{n}\negthickspace\nabla_{\mkern-4mu \theta}\,g_{i}(\theta_{0})$.
\end{thm}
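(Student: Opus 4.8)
The plan is to exploit the reduction, already established in Section \ref{subsec:theory-modelintro}, of the two-stage estimator to a single just-identified Z-estimator whose stacked moment vector is $\hat{m}(\theta)=\frac{1}{n}\sum_{i=1}^{n}g_{i}(\theta)$, with $\hat\theta$ an exact root $\hat m(\hat\theta)=0$, and then to verify the hypotheses of a standard finite-dimensional Z-estimation theorem (in the tradition of Huber (1967), Pakes and Pollard (1989), and Newey and McFadden (1994)). Because the cloglog log-likelihood and the polynomial control terms are smooth in $\theta$, I would use the differentiable (smooth-moment) version of such a theorem, which splits into a consistency argument and an asymptotic-normality argument built on a mean-value expansion of the first-order condition. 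Throughout, the crucial structural input is that the stacked population moment $\mathbb{E}[g_{i}(\theta)]$ is correctly centered at, and uniquely zeroed by, $\theta_{0}$.

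\textbf{Consistency.} First I would check that $\mathbb{E}[g_{i}(\theta_{0})]=0$. For the first block this is immediate: $g_{1i}$ is the normal-equations score of the linear auxiliary model, so $\mathbb{E}[\sum_{t}z_{it}(x_{it}-z_{it}'\pi_{0})]=\mathbb{E}[\sum_{t}z_{it}v_{it}]=0$ by definition of the projection coefficient $\pi_{0}$. For the second block, Assumptions \ref{assu:condindep} and \ref{assu:CFspec} deliver $e_{it}\mid x_{it},z_{1it},p(v_{it})\sim\mathrm{Gumbel}(0,1)$, so that conditional on the included regressors and the true control terms the second-stage model is a correctly specified cloglog model; together with the independent-censoring hypothesis of Assumption \ref{assu:setup}, this makes the full discrete-time survival likelihood correctly specified, whence the conditional-mean-zero property of a correctly specified score gives $\mathbb{E}[g_{2i}(\theta_{0})]=0$. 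Uniqueness of the zero follows from Assumption \ref{assu:rank}: the instrument rank condition makes the expected first-stage Hessian invertible and pins down $\pi_{0}$ (relevance $\pi_{2}\neq0$ guaranteeing the generated control is not collinear with the remaining regressors), and, given $\pi_{0}$, the primary model rank condition makes the expected second-stage Hessian invertible and pins down $\gamma_{0}$. Consistency then follows in the usual way by combining this identification with a uniform law of large numbers for $\hat{m}(\cdot)$ over the compact parameter space of Assumption \ref{assu:regularity}, the uniform convergence resting on independence across $i$, continuity of $g_{i}(\cdot)$, and a domination/moment condition on the score.

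\textbf{Asymptotic normality.} With consistency in hand, I would expand the first-order condition about $\theta_{0}$ (row by row),
\[
0=\hat{m}(\hat{\theta})=\hat{m}(\theta_{0})+\nabla_{\mkern-4mu\theta}\,\hat{m}(\bar{\theta})\,(\hat{\theta}-\theta_{0}),
\]
for $\bar{\theta}$ between $\hat{\theta}$ and $\theta_{0}$, and rearrange to
\[
\sqrt{n}(\hat{\theta}-\theta_{0})=-\left[\nabla_{\mkern-4mu\theta}\,\hat{m}(\bar{\theta})\right]^{-1}\sqrt{n}\,\hat{m}(\theta_{0}).
\]
The two factors are then handled separately. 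Because the $g_{i}(\theta_{0})$ are independent across $i$ but need not be identically distributed (individuals differ in censoring times and covariate paths), the limit of $\sqrt{n}\,\hat{m}(\theta_{0})$ requires the Lindeberg--Feller CLT rather than the i.i.d.\ version; the Lindeberg condition of Assumption \ref{assu:regularity} is exactly what licenses this, giving $\sqrt{n}\,\hat{m}(\theta_{0})\xrightarrow{d}N(0,\Omega)$. A uniform law of large numbers together with $\bar{\theta}\xrightarrow{p}\theta_{0}$ gives $\nabla_{\mkern-4mu\theta}\,\hat{m}(\bar{\theta})\xrightarrow{p}G$; here $G$ is block lower-triangular with diagonal blocks equal to the expected first- and second-stage Hessians, both invertible by Assumption \ref{assu:rank}, so $G$ itself is invertible (this triangular structure is also what automatically propagates the first-stage estimation error into the second-stage limiting variance). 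Slutsky's theorem then delivers $\sqrt{n}(\hat{\theta}-\theta_{0})\xrightarrow{d}N(0,G^{-1}\Omega(G^{-1})')$, and consistency of $\widehat{V}=\widehat{G}^{-1}\widehat{\Omega}(\widehat{G}^{-1})'$ follows from the same LLN applied to $\nabla_{\mkern-4mu\theta}\,g_{i}$ (evaluated in the feasible version at the consistent $\hat\theta$, invoking continuity of the Jacobian) together with the stipulated consistency of $\widehat{\Omega}$, via the continuous mapping theorem for matrix inversion and multiplication.

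\textbf{Main obstacle.} I expect the delicate step to be the uniform-convergence/domination argument underpinning consistency, made awkward by two features. First, the second-stage quasi-score uses the generated control terms $p(\hat{v}_{it})$, so $g_{2i}$ depends on $\pi$ as well as $\gamma$ and one must control its behavior jointly over the whole parameter space. Second, the cloglog score contains the factor $b_{is}=h_{it}e^{-h_{it}}/(1-e^{-h_{it}})$ with $h_{it}=e^{\psi_{t}+\mathfrak{\zeta}_{it}'\beta}$, whose behavior as $h_{it}\to0$ or $h_{it}\to\infty$ must be dominated uniformly to secure the requisite moment bounds. Compactness of the parameter space (Assumption \ref{assu:regularity}) and the rank/relevance conditions (Assumption \ref{assu:rank}) are precisely the tools for taming these, and verifying that they suffice --- rather than the mechanical CLT and expansion steps --- is where the real content of the proof lies.
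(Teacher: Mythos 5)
Your proposal is correct and follows essentially the same route as the paper: stacking the just-identified first- and second-stage moments into a single Z-estimator, establishing $\mathbb{E}[g_{i}(\theta_{0})]=0$ and uniqueness from the rank conditions, invoking the Lindeberg--Feller CLT for $\sqrt{n}\,\hat{m}(\theta_{0})$ precisely because the entities are independent but not identically distributed, obtaining the limit Jacobian $G$ as a block lower-triangular matrix whose diagonal blocks are invertible, and finishing with Slutsky and the continuous mapping theorem for $\widehat{V}$. The only differences are in the technical vehicle and some details the paper fills in explicitly: the paper verifies the hypotheses of a Huber (1967)/Wellner-style Z-theorem (a stochastic equicontinuity condition standing in for your row-by-row mean-value expansion of the Jacobian, which is equally valid here since the moments are smooth), uses a dominated-convergence lemma to justify the interchange of expectation and differentiation behind $\mathbb{E}[g_{2i}(\theta_{0})]=0$, and supplies the short argument that $\mathbb{E}\left[d_{it}\xi_{it}\xi_{it}'\right]$ inherits invertibility from $\mathbb{E}\left[\xi_{it}\xi_{it}'\right]$ because $d_{it}<0$ --- a bridging step your sketch elides when it attributes invertibility of the second-stage Hessian directly to Assumption \ref{assu:rank}.
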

$G$ has the structure
\begin{align*}
G & \equiv\left[\begin{array}{c|c}
G_{11} & G_{12}\\
\hline G_{21} & G_{22}
\end{array}\right]\\
 & \equiv\left[\begin{array}{c|c}
\mathbb{E}\left[-z_{it}z_{it}'\right] & \text{\raisebox{-0.2em}{\LARGE\ensuremath{\boldsymbol{0}}}}\\
\hline \mathbb{E}\left[\mkern3mu\mathchar'26\mkern-12mu d_{it}\xi_{it}z_{it}'\right] & \mathbb{E}\left[d_{it}\xi_{it}\xi_{it}'\right]
\end{array}\right],
\end{align*}
where $\xi_{it}\coloneqq(\varphi_{t}',\mathfrak{\zeta}_{it}')'\equiv(\varphi_{t}',\mathfrak{z}_{it}',x_{it},p(\hat{v}_{it})')'$
is a transposed row vector of second-stage regressors, $z_{it}$ is
a transposed row vector of auxiliary model instruments (i.e., the
same first-stage regressors as before), $\gamma$ is the coefficient
vector on $\xi_{it}$, and $\mkern3mu\mathchar'26\mkern-12mu d_{it}$
and $d_{it}$ are defined by
\[
\mkern3mu\mathchar'26\mkern-12mu d_{it}=-d_{it}\stackrel[q=1]{Q}{\sum}\beta_{3q}v_{it}^{q-1};\qquad d_{it}=\begin{cases}
e^{\xi_{it}'\gamma}\frac{e^{e^{\xi_{it}'\gamma}}-e^{\xi_{it}'\gamma+e^{\xi_{it}'\gamma}}-1}{\left(e^{e^{\xi_{it}'\gamma}}-1\right)^{2}} & y_{it}=1\\
-e^{\xi_{it}'\gamma} & y_{it}=0
\end{cases},
\]
with $Q$ being the order of the control function polynomial and $\beta_{3q}$
being the coefficient on $v^{q}$.

$\widehat{G}$ has the structure

\begin{align*}
\widehat{G} & \equiv\left[\begin{array}{c|c}
\widehat{G}_{11} & \widehat{G}_{12}\\
\hline \widehat{G}_{21} & \widehat{G}_{22}
\end{array}\right]\\
 & \equiv\left[\begin{array}{c|c}
\frac{1}{\tilde{n}}\stackrel[i=1]{n}{\sum}\stackrel[t=1]{T}{\sum}-z_{it}z_{it}' & \text{\raisebox{-0.2em}{\LARGE\ensuremath{\boldsymbol{0}}}}\\
\hline \frac{1}{\tilde{n}}\stackrel[i=1]{n}{\sum}\stackrel[t=1]{T}{\sum}\mathrlap{\hskip1ex\hat{\vphantom{d}}}\mkern3mu\mathchar'26\mkern-12mu d_{it}\hat{\xi}_{it}z_{it}' & \frac{1}{\tilde{n}}\stackrel[i=1]{n}{\sum}\stackrel[t=1]{T}{\sum}\hat{d}_{it}\hat{\xi}_{it}\hat{\xi}_{it}'
\end{array}\right]\\
 & \equiv\left[\begin{array}{c|c}
-Z'Z & \text{\raisebox{-0.2em}{\LARGE\ensuremath{\boldsymbol{0}}}}\\
\hline \widehat{\Xi}'\mathrlap{\hskip1.1ex\widehat{\vphantom{D}}}\mkern2mu\rule[0.75ex]{0.75ex}{0.06ex}\mkern-8mu DZ & \widehat{\Xi}'\widehat{D}\widehat{\Xi}
\end{array}\right].
\end{align*}
$\tilde{n}$ is the total number of observations -- i.e., the number
of pairs $(i,t)$. $\widehat{\Xi}$ and $Z$ are the matrix versions
of $\hat{\xi}_{it}\coloneqq(\varphi_{t}',\mathfrak{\hat{\zeta}}_{it}')'$
and $z_{it}$ -- i.e., $\hat{\xi}_{it}'$ and $z_{it}'$ stacked;
$\hat{\xi}_{it}$ only differs from $\xi_{it}$ in that it contains
$p(\hat{v}_{it})$ rather than $p(v_{it})$. $\mathrlap{\hskip1.1ex\widehat{\vphantom{D}}}\mkern2mu\rule[0.75ex]{0.75ex}{0.06ex}\mkern-8mu D$
and $\widehat{D}$ are defined by
\[
\mathrlap{\hskip1.1ex\widehat{\vphantom{D}}}\mkern2mu\rule[0.75ex]{0.75ex}{0.06ex}\mkern-8mu D\coloneqq\mathrm{diag}\left(\mathrlap{\hskip1ex\hat{\vphantom{d}}}\mkern3mu\mathchar'26\mkern-12mu d_{it}\right);\qquad\widehat{D}\coloneqq\mathrm{diag}\left(\hat{d}_{it}\right),
\]
with
\[
\mathrlap{\hskip1ex\hat{\vphantom{d}}}\mkern3mu\mathchar'26\mkern-12mu d_{it}=-\hat{d}_{it}\stackrel[q=1]{Q}{\sum}\hat{\beta}_{3q}\hat{v}_{it}^{q-1};\qquad\hat{d}_{it}=\begin{cases}
e^{\hat{\xi}_{it}'\hat{\gamma}}\frac{e^{e^{\hat{\xi}_{it}'\hat{\gamma}}}-e^{\hat{\xi}_{it}'\hat{\gamma}+e^{\hat{\xi}_{it}'\hat{\gamma}}}-1}{\left(e^{e^{\hat{\xi}_{it}'\hat{\gamma}}}-1\right)^{2}} & y_{it}=1\\
-e^{\hat{\xi}_{it}'\hat{\gamma}} & y_{it}=0
\end{cases}.
\]

The invertibility of $\widehat{G}$ is implied by the invertibility
of $Z'Z$ and $\widehat{\Xi}'\widehat{\Xi}$.
\begin{proof}
See Appendix \ref{sec:proof}.
\end{proof}
\begin{rem}
Because the asymptotics are over $i$, we could use the Lindeberg-L\'evy
CLT if the second-stage quasi-score contributions are assumed to be
IID draws (e.g., if the sampling of entities was uniformly random)
from some finite-variance distribution. This is unsurprising since
the hazard functions for each entity $i$ are modeled as a deterministic
function of only the regressors for entity $i$. However, following
Prentice and Gloeckler (1978), I instead use the Lindeberg-Feller
CLT. The Lindeberg condition is weaker than the IID assumption for
Lindeberg-L\'evy CLT, and using the Lindeberg condition allows us
to avoid making the strong assumption that the data must come from
a representative sample. Moreover, the interpretation of the Lindeberg
condition is more elucidating than the assumption of some finite-variance
distribution for all entities.

The Lindeberg condition simply means that a single entity cannot affect
the second-stage quasi-score too much. Sufficient conditions are the
regressors being bounded and the hazard rate being bounded away from
zero.\footnote{Intuitively, if the hazard rate is bounded away from zero, then the
``worst case'' is this bound. For convenience, let this bound be
a constant. The survival time at this bound follows a geometric distribution.
The law of total variance and the fact that the observation-specific
quasi-score contributions are expectation zero means that the total
variance of the entity-specific score contributions is the sum of
the variance of the entity-specific score given survival up to a certain
time period multiplied by the probability of surviving up to that
time period). The boundedness of regressors implies a linear bound
on the former, so the the latter decays faster than the former grows.
Thus the ratio test gives us convergence of the total variance, implying
that the Lindeberg condition is satisfied.} The Lindeberg condition can be thought of as a requirement that the
survival function must, on average, decay fast enough; this is because
each period adds another term to the entity-specific score contribution,
increasing its variance. Intuitively, this means that we want large-n,
small-T asymptotics. (Note that these are not ``fixed-T'' asymptotics
since the number of time periods per entity is random.) This is unsurprising
since the Prentice and Gloeckler (1978) model has an equivalent representation
as a cloglog model with time fixed effects -- this is what we would
expect from a fixed effects model.
\end{rem}
\begin{rem}
The assumption that the (finite-dimensional and fixed) true parameter
is in the interior of a compact parameter space may seem strong, but
this is actually a very weak assumption because the parameter space
is not fixed. Sufficient conditions are that the (finite-dimensional
and fixed) true parameter here is real-valued and that there are no
shape restrictions (i.e., inequality constraints on the parameters)
because then we can always choose a large enough compact parameter
space that includes said parameter in its interior. (This argument
can essentially be expressed in more general, concise, and formal
terms as the mathematical technique of \textit{compactification}.
Briefly, if you have a topological space that is non-compact, often
you can recast it -- i.e, \textit{embed} it -- as a dense subset
of a larger, compact space and use that larger, compact space instead.
See Bahadur, 1971, for a discussion of this.) Note that these sufficient
conditions preclude the inclusion of perfect predictors of the outcome
variable in the second stage because they will have infinite coefficients.

Relatedly, assuming that the true parameter is in the interior of
its parameter space is necessary for the asymptotic normality of the
corresponding estimator. When the parameter space is fixed, what we
are really saying is that we are treating the parameter space as a
search space in which we conduct constrained optimization. Why this
would be problematic is intuitive and immediately apparent: the PDF
(if it exists) must be zero outside the parameter space in the limit
because the parameter estimate can never be outside this search space.
If the true parameter is on a boundary of the parameter space, then
the asymptotic distribution of the estimate clearly cannot be normal.
Thankfully, this is not a concern here because we can always take
a compact parameter space large enough to make the true parameter
be in the interior under the aforementioned sufficient conditions.
\end{rem}

\subsubsection{The Interpretation and Selection of the Control Function\label{subsubsec:theory-CFinterpretation}}

Let $\hat{c}(\hat{v}_{it})$ denote the estimated control function,
which is composed of the first-stage residuals $\hat{v}_{it}$ fed
into some estimate $\hat{c}(\cdot)$ of $c(\cdot)$. In this paper,
I have focused on $\hat{c}(\hat{v}_{it})\coloneqq\hat{p}(v_{it})'\hat{\beta}_{3}$,
but the discussion here still applies to non-polynomial control functions.
Note that the unobserved control function $c(v_{it})$ represents
the unobserved heterogeneity in the ``true model''. We thus are
assuming that the unobserved heterogeneity takes the functional form
of $c(\cdot)$ and that it is only a one-dimensional function of $v_{it}$.

The general interpretation of the estimated control function $\hat{c}(\hat{v}_{it})$
is that it is a good proxy for the unobserved heterogeneity $c(v_{it})$.
In models where the second stage is linear or probit, a conditional
mean restriction is required rather than a conditional independence
assumption: consequently, $c(v_{it})$ just needs to flexible enough
to have the same mean as the unobserved heterogeneity, conditional
on $v_{it}$ (e.g., see Rivers and Vuong, 1988). Then, the estimated
control function $\hat{c}(\hat{v}_{it})$ is estimating the mean of
the unobserved heterogeneity. 

However, that specific interpretation is not true for non-probit (nominally)
discrete choice models because a conditional independence assumption
is required. Instead, $c(v_{it})$ must have the same distribution
as the unobserved heterogeneity; hence, a correct interpretation of
$\hat{c}(\hat{v}_{it})$ is that it is an optimal approximation in
distribution to the unobserved heterogeneity.\footnote{The control function approach can also be understood as the flexibly
parametric counterpart of the Blundell and Powell (2003, 2004) semi-parametric
approach, which does not parametrically restrict the relationship
between $u_{it}$ and $e_{it}$. (See Imbens \& Newey, 2009, for the
generalized framework and Wooldridge, 2015, for a summary.)} After all, when we carry out MLE-type estimation, what we are really
doing is fitting the (GLM) pseudo-residuals $\hat{e}_{it}$ such that
their empirical distribution is an optimal fit to the assumed distribution
of $e_{it}$, with the norm being the Kullback-Leibler divergence.\footnote{Equivalently, with the estimated control function included in the
fitted model, we attempt to make the hazard function as close to proportional
as possible in some metric. See \ref{rem:distlinkhazardequiv}.} By the additivity property of the Kullback-Leibler divergence and
the fact that the distribution of $e_{it}$ does not depend on $c(v_{it})$,
we see that $\hat{c}(\cdot)$ is fitted such that $\hat{c}(\hat{v}_{it})$
is an optimal approximation in distribution to $c(v_{it})$ in the
Kullback-Leibler sense, subject to the functional form restrictions
of $c(\cdot)$.

A key takeaway is that more flexibility in the functional form of
$c(\cdot)$ can increase improve accuracy by allowing $\hat{c}(\hat{v}_{it})$
to better approximate the distribution of the unobserved heterogeneity.
Though, too flexible a functional form can result in concerns of overfitting
and the curse of dimensionality; thus, the practitioner is looking
for a sweet spot in the middle. This is essentially a machine learning-style
dictionary selection problem, and there are many techniques from the
machine learning literature that can be used to help with this selection
process, e.g., cross-validation.

Finally, the other key takeaway is that this implies that an appropriate
robustness check is to examine how $\hat{\theta}$ changes in response
to different functional forms being specified for $c(\cdot)$, e.g.,
by varying the degree of a polynomial control function. After the
selection process, if $\hat{\theta}$ is not sensitive to slight changes
to the functional form, then this suggests that the functional form
of $c(\cdot)$ is near-optimal.

\subsubsection{Asymptotic Inefficiency\label{subsubsec:theory-effandCIME}}

\noindent Control function methods are asymptotically inefficient
(this includes 2SLS) -- compared with feasible efficient IV-GMM methods,
they trade efficiency for robustness. The parallels of QMLE to true
MLE can be leveraged to give an intuitive and in-depth explanation
of the cause of this inefficiency.

Rather than a joint likelihood, we use a conditional likelihood. In
other words, we see that the conditional information matrix does not
reflect all of the information available: we pass only the first-stage
estimates into the second stage, and any other information about the
excluded instruments is lost from the perspective of the second stage.
This intuitively explains why the second-stage asymptotic variance
matrix would be inflated.

I give a more in-depth explanation here. We effectively conduct constrained
optimization of the true likelihood when we do QMLE. Note that adding
more information tends to sharpen (i.e., increase the magnitude of
the curvature) of the objective function; thus, we are constrained
in not using that discarded information to sharpen the objective function.
Therefore, the magnitude of the curvature of the objective function
is limited. Now, note that $G_{22}$ is the expected gradient of the
second-stage quasi-score -- the negative of the expected Hessian
of the second-stage conditional log-likelihood (i.e., second-stage
quasi-log-likelihood), and thus embodies the negative of the curvature.
The limiting of the curvature therefore limits the size of the determinant
of $G_{22}$ since the determinant of a Hessian, evaluated at a critical
point of a function, equals the Gaussian curvature at that point.
Because the conditional information matrix equality implies that the
second-stage asymptotic variance equals $G_{22}^{-1}$, using the
fact that $det(A^{-1})=\sfrac{1}{det(A)}$ for an arbitrary matrix
$A$, we can see that this ultimately results in the (root-$n$-scaled)
asymptotic variance matrix being inflated in the sense that its determinant
is limited in how close to zero it can get.

\subsection{Extensions\label{subsec:theory-extensions}}

\subsubsection{Multiple Endogenous Regressors\label{subsubsec:theory-multiauxdep}}

Under very similar assumptions, it is trivial to extend Theorem \ref{thm:CAN}
to allow for multiple endogenous regressors (strictly speaking, multiple
first-stage dependent variables), i.e., when $x_{it}$ is a vector
and there are multiple first-stage regressions -- one for each element
of $x_{it}$. The only noteworthy assumption here is Assumption \ref{assu:CFspec}.
The overall, multivariate control function $c(v_{it})$ is allowed
to simply be composed of the separate, univariate control functions
for each first-stage regression, i.e., just having separate control
functions for each each endogenous regressor. However, note that $c(v_{it})$
is also allowed to be a non-separable function of all first-stage
residual terms -- e.g., a multivariate polynomial with all residual
terms as arguments. Thus, when there are multiple endogenous regressors,
there is additional flexibility in the choice of functional form for
the control function $c(v_{it})$.

The next thing to note is what the asymptotic variance $V\equiv G^{-1}\Omega(G^{-1})'$
and its estimator $\widehat{V}\equiv\widehat{G}^{-1}\widehat{\Omega}(\widehat{G}^{-1})'$
become. $\Omega$ and $\widehat{\Omega}$ have the same structure
as before, but $G$ and $\widehat{G}$ do not. Denoting the number
of first-stage dependent variables by $\kappa$, $G$ is given by
\begin{align*}
G & \equiv\left[\begin{array}{c|c}
G_{11} & G_{12}\\
\hline G_{21} & G_{22}
\end{array}\right]\\
 & \equiv\left[\begin{array}{c|c}
\mathbb{E}\left[-z_{it}z_{it}'\right] & \text{\raisebox{-0.2em}{\LARGE\ensuremath{\boldsymbol{0}}}}\\
\hline \mathbb{E}\left[\mkern3mu\mathchar'26\mkern-12mu d_{it}\xi_{it}z_{it}'\right] & \mathbb{E}\left[d_{it}\xi_{it}\xi_{it}'\right]
\end{array}\right].
\end{align*}

On the other hand, $\widehat{G}$ is given by
\begin{align*}
\widehat{G} & \equiv\left[\begin{array}{c|c}
\widehat{G}_{11} & \widehat{G}_{12}\\
\hline \widehat{G}_{21} & \widehat{G}_{22}
\end{array}\right]\\
 & \equiv\left[\begin{array}{ccc|c}
-Z^{(1)}{}'Z^{(1)} &  & \vcenter{\hbox{\text{\kern-1em  \raisebox{-1em}{\huge\ensuremath{\boldsymbol{0}}}}}}\\
 & \mathclap{\ddots} &  & \vcenter{\hbox{\text{\raisebox{0.4em}{\huge\ensuremath{\boldsymbol{0}}}}}}\\
\vcenter{\hbox{\text{\kern1em  \raisebox{1em}{\huge\ensuremath{\boldsymbol{0}}}}}} &  & -Z^{(\kappa)}{}'Z^{(\kappa)}\\
\hline \widehat{\Xi}'\mathrlap{\hskip1.1ex\widehat{\vphantom{D}}}\mkern2mu\rule[0.75ex]{0.75ex}{0.06ex}\mkern-8mu DZ^{(1)} & \mathclap{\cdots} & \widehat{\Xi}'\mathrlap{\hskip1.1ex\widehat{\vphantom{D}}}\mkern2mu\rule[0.75ex]{0.75ex}{0.06ex}\mkern-8mu DZ^{(\kappa)} & \widehat{\Xi}'\widehat{D}\widehat{\Xi}
\end{array}\right],
\end{align*}
where $Z^{(1)}$ is the matrix of instruments for the first auxiliary
equation (i.e., for the first endogenous regressor) and $Z^{(\kappa)}$
is the matrix of instruments for the last auxiliary equation (i.e.,
for the last endogenous regressor).

When identical instruments are used for all first-stage regressions,
the expression simplifies to
\[
\left[\begin{array}{c|c}
I_{\kappa}\otimes-Z'Z & \text{\raisebox{-0.2em}{\LARGE\ensuremath{\boldsymbol{0}}}}\\
\hline \boldsymbol{1}_{\kappa}'\otimes\widehat{\Xi}'\mathrlap{\hskip1.1ex\widehat{\vphantom{D}}}\mkern2mu\rule[0.75ex]{0.75ex}{0.06ex}\mkern-8mu DZ & \widehat{\Xi}'\widehat{D}\widehat{\Xi}
\end{array}\right],
\]
where $I_{\kappa}$ is an identity matrix, $\boldsymbol{1}_{\kappa}'$
is a row vector of ones, and $\otimes$ represents the Kronecker product.

\subsubsection{Discrete Endogenous Regressors That Are Functions of Continuous Variables\label{subsubsec:theory-funcauxdep}}

In the primary model, the endogenous regressors $x_{it}$ can actually
be replaced by $f(x_{it})$ for any (measurable, potentially vector-valued)
function $f$ without needing to change the auxiliary model or the
control function: Theorem \ref{thm:CAN} still carries over.\footnote{This function of $x_{it}$ being measurable just means that it is
a random variable itself.} This property is characteristic of the control function approach
and applies more generally to other models (e.g., linear or probit).
Ultimately, this is because $(x_{it},z_{it})\indep e_{it}|p(v_{it})\implies(f(x_{it}),z_{it})\indep e_{it}|p(v_{it})$:
it guarantees that $f(x_{it})$ will also be conditionally independent
of the new error term when the control function is added as a regressor,
just like $x_{it}$. In other words, when your endogenous regressors
are $f(x_{it})$, you only need to run one first-stage regression
for each element of $x_{it}$.

As an example, suppose that your model has five endogenous regressors
of the form $x_{1it}$, $x_{1it}^{2}$, $x_{2it}$, $x_{2it}^{2}$,
and $x_{1it}x_{2it}$, for some observed scalar, continuous variables
$x_{1it}$ and $x_{2it}.$ You only need to run two first-stage regressions
-- one for $x_{1it}$ and one for $x_{2it}$ -- and include the
resulting two control functions (or one non-separable, multivariate
control function).

Note that the untransformed $x_{it}$ is allowed to be absent from
the primary model. This can particularly useful -- for example, suppose
that your model has one endogenous regressor, $f(x_{it})=\boldsymbol{1}(x_{it}>0)$,
for some observed scalar, continuous variable $x_{it}$. If we want
to directly instrument $f(x_{it})$, since $\boldsymbol{1}(x_{it}>0)$
is a binary variable, the first stage is not allowed to be linear
(see below). We could estimate a nonlinear model for the first stage
and use the residuals from that to construct a control function, but
this requires stronger assumptions (Wooldridge, 2015). Instead, because
$x_{it}$ is observed rather than latent, you can just use $x_{it}$
as a first-stage dependent variable rather than $\boldsymbol{1}(x_{it}>0)$.
Then, construct a control function as normal. For an example of this,
see Palmer (2013, 2023), who does this to instrument an indicator
for a loan being underwater. A loan is underwater if its loan-to-value
ratio exceeds 100\%, so the loan-is-underwater indicator can be instrumented
by using loan-to-value ratio as a first-stage dependent variable.

\subsubsection{Linear versus Nonlinear Auxiliary Models\label{subsubsec:theory-nonlinearaux}}

It is a well-known fact that we can always run a linear first stage
if the second stage is linear (Angrist \& Pischke, 2009; Kelejian,
1971). Even if the true auxiliary model (``first stage'') is nonlinear,
assuming that the true primary model (``second stage'') is linear,
under typical 2SLS assumptions, the resulting estimates will still
be consistent. This is why running a nonlinear first stage in such
a scenario is considered a ``forbidden regression'' (Angrist \&
Pischke, 2009; Hausman, 1975) -- if you did so, the resulting estimates
would be much less robust to misspecification since their consistency
is only guaranteed under strong assumptions. Unfortunately, when the
primary model is nonlinear, this result no longer applies. Given that
the second stage is cloglog, this means we must carefully consider
the form of the first stage.

Using a linear first stage is not necessary if the first-stage dependent
variables are continuous. So, why is this assumed in the Section \ref{subsec:theory-modelintro}
model? Well, as with all two-stage IV methods, we want an estimator
of the conditional expectation function of the first stage in order
to exploit some sort of orthogonality condition (here, conditional
independence) between the instruments and the second-stage error term.
OLS is an estimator with the most parsimonious functional form: OLS
provides the best linear approximation to the conditional expectation
function. So, even if the true first stage is not truly linear, OLS
will often provide a good approximation to reality when the first-stage
dependent variables are continuous.

On the other hand, if the first-stage dependent variables are discrete,
then we know that a linear (i.e., a linear probability model) is a
poor approximation to reality if the predicted probabilities are substantially
different from 0.5. This motivates the use of a nonlinear first stage.
Theorem \ref{thm:CAN} can easily be extended to allow for this: because
the form of the first stage is not central to the proof (see Appendix
\ref{sec:proof}), a nonlinear first stage is fine in principle --
we just need to make additional assumptions. For more detail, see
Wooldridge's (2015) work on the control function approach for discrete
first-stage dependent variables.

\section{Stata Implementation\label{sec:stata}}

\subsubsection*{An Automated Stata Command}

The GMM estimation procedure described in Section \ref{sec:theory}
can be easily and conveniently implemented using the Stata command
\texttt{\textbf{\small{}ivcloglog}}, which is available from Boston
College's SSC (Statistical Software Components archive). As with all
commands available through SSC, Stata users can install \texttt{\textbf{\small{}ivcloglog}}
by simply running \texttt{\textbf{\small{}ssc install ivcloglog}}
in Stata. (\texttt{\textbf{\small{}ssc}} is a built-in Stata command.)

Before covering how to estimate the instrumented version of the Prentice
and Gloeckler (1978) model, I will first give an overview of how to
estimate the original, basic version (i.e., with all regressors exogenous).
Since the Prentice and Gloeckler (1978) model can be nominally expressed
as a cloglog model with time fixed effects (Allison, 1982; Jenkins,
1995), the setup required for estimating it in Stata is very simple:
\begin{enumerate}
\item If not the case already, format the data so that the observations
for each entity $i$ end after event occurrence.
\item If not the case already, generate the outcome variable -- a binary
indicator for event occurrence that is 1 in the time period of the
event and 0 otherwise.
\item Using the resulting dataset, simply estimate a cloglog model (with
the built-in command \texttt{\textbf{\small{}cloglog}}) using the
aforementioned outcome variable and with fixed effects for each time
period included as controls.
\end{enumerate}
(For a user-created command, \texttt{\textbf{\small{}pgmhaz8}}, that
automates these steps and estimates a model with exogenous regressors,
see Jenkins, 2004.) The process for estimating a model with endogenous
regressors is identical, except with the use of the command \texttt{\textbf{\small{}ivcloglog}}
instead of \texttt{\textbf{\small{}cloglog}}.

\texttt{\textbf{\small{}ivcloglog}} checks for collinear variables
(in the instruments $Z$ and augmented second-stage regressors $\widehat{\Xi}$)
or perfect predictors (in the augmented second-stage regressors $\widehat{\Xi}$)
and excludes them from the estimation. Using the remaining variables,\texttt{\textbf{\small{}
ivcloglog}} obtains the point estimates using \texttt{\textbf{\small{}regress}}
(for the first-stage) and \texttt{\textbf{\small{}cloglog}} (for the
second-stage) and then calculates an appropriate variance-covariance
matrix estimate (VCE) that covers both the first and second stages
using the formulae detailed in Section \ref{sec:theory}. (The OLS
first stage means that all endogenous regressors are assumed to be
either continuous or a known, possibly discontinuous function of the
first-stage dependent variables.) By default, the VCE is simply obtained
through the \texttt{\textbf{\small{}gmm}} command.

However, Stata may fail to do so through the \texttt{\textbf{\small{}gmm}}
command due to numerical difficulties. In particular, although the
earlier collinearity-checking guarantees that $\widehat{G}$ is exactly
invertible (see Part 2B), if $\widehat{G}$ is highly near-singular,
Stata may nonetheless declare that $\widehat{G}$ is numerically singular
and refuse to obtain the VCE $\widehat{V}=\widehat{G}^{-1}\widehat{\Omega}(\widehat{G}^{-1})'$.\footnote{Technically, it is more efficient to find $\widehat{V}$ by using
LU decomposition to solve $\widehat{G}\widehat{V}\widehat{G}'=\widehat{\Omega}$
for $\widehat{V}$ rather than directly calculating $\widehat{V}=\widehat{G}^{-1}\widehat{\Omega}(\widehat{G}^{-1})'$,
and this is presumably what Stata actually does. In any case, Stata
will not obtain $\widehat{V}$.} For example, this can occur when the dataset has many observations
and is very sparse. The presence of sparse columns in $Z$ and $\widehat{\Xi}$
means that changing the corresponding fixed effect parameters affects
the moment functions very little, resulting in near-zero columns in
the (sample) matrix of moment function derivatives $\widehat{G}$.
Stata can treat these as actually being columns of zeros.

This sparsity issue can be particularly common when estimating any
version of the Prentice and Gloeckler (1978) model due to the need
to include time fixed effects, and can be compounded if other fixed
effects are included as controls. This is because each level of a
fixed effect represents a dummy variable (i.e., a binary indicator
variable), and we must directly include these dummy variables as regressors
-- since the second stage is nonlinear rather than linear, you cannot
simply instead ``absorb'' the fixed effects by applying a within-transformation
to the whole dataset.

The option \texttt{\textbf{\small{}difficult\_vce}}, which requests
that alternative code be used to calculate the VCE, resolves this
problem. With this option, the code will construct $\widehat{G}$
and $\widehat{\Omega}$ and then solve for $\widehat{V}$ with a tolerance
threshold of 0 (i.e., floating-point numbers are treated as 0 only
if they are exactly 0). Since $\widehat{G}$ is exactly invertible,
this prevents Stata from declaring that $\widehat{G}$ is numerically
singular.

An alternative, fast procedure for obtaining the second-stage VCE
for generic control function models is described by Terza (2016).

\subsubsection*{}

\section{Conclusion\label{sec:concl}}

This paper introduces a theory-complete IV methodology for the estimation
of hazard models and discrete choice models) that have endogenous
regressors, including when there is measurement error in the regressors.
Although there are existing methods for dealing with mismeasured regressors
in hazard models, these are not ``true'' IV methods: they do not
allow for other types of endogenous regressors, and are therefore
limited in their usefulness. On the other hand, there do exist proposals
on IV methodologies to tackle the problem of endogeneity in hazard
models, but none give an adequate treatment of the subject: these
are either highly heuristic and not theoretically grounded or dangerously
unrobust. Similarly, existing papers on the use of control functions
to instrument discrete choice models are narrowly restricted to probit
and Poisson regression, and a solution for generic discrete choice
models is missing.

To fill this gap, I rigorously demonstrate how the IV relevance condition
and instrument conditional independence assumption (the control function
analog of the 2SLS exclusion restriction), together with minor conditions
and the assumption of a sufficiently flexible functional form for
the control function, allow causal effects to be recovered from hazard
models and discrete choice models with generic endogenous regressors.

This paper contributes to a nascent literature, and theoretically
justifies and undergirds already existing empirical work that uses
the proposed estimator. It supplies a theory guide, thus serving as
a crutch for existing and future empirical work, and this is compounded
by the guidance, presented in this paper, on my Stata command, \texttt{\textbf{\small{}ivcloglog}},
which allows practitioners to easily implement the proposed estimator.
Furthermore, this paper serves as a starting point for future theoretical
work to further improve the robustness of IV methodologies in hazard
models. All in all, this paper opens up promising new theoretical
and empirical avenues for practitioners to investigate.

\newpage{}

\begin{singlespace}
\bibliographystyle{hapalike_mod}
\nocite{*}
\bibliography{ivhazard_theory_guide_v1}

\begin{thebibliography}{}

\bibitem[Abadie, 2003]{Abadie2003}
Abadie, A. (2003).
\newblock Semiparametric instrumental variable estimation of treatment response
  models.
\newblock {\em Journal of Econometrics}, 113(2):231--263.

\bibitem[Allison, 1982]{Allison1982}
Allison, P.~D. (1982).
\newblock Discrete-time methods for the analysis of event histories.
\newblock {\em Sociological Methodology}, 13:61--98.

\bibitem[Angrist and Pischke, 2009]{AngristPischke2009}
Angrist, J.~D. and Pischke, J.-S. (2009).
\newblock {\em Mostly Harmless Econometrics: An Empiricist's Companion}.
\newblock Princeton University Press.

\bibitem[Atiyat, 2011]{Atiyat2011}
Atiyat, M. (2011).
\newblock Instrumental variable modeling in a survival analysis framework
  {(Publication No. 12124) [Dissertation Thesis, Pennsylvania State
  University]. PSU ETDA}.

\bibitem[Bahadur, 1971]{Bahadur1971}
Bahadur, R. (1971).
\newblock {\em Some Limit Theorems in Statistics}, volume~4.
\newblock SIAM.

\bibitem[Bijwaard, 2008]{Bijwaard2008}
Bijwaard, G.~E. (2008).
\newblock {Instrumental Variable Estimation for Duration Data}.
\newblock Tinbergen Institute Discussion Papers 08-032/4, Tinbergen Institute.

\bibitem[Bijwaard, 2009]{Bijwaard2009}
Bijwaard, G.~E. (2009).
\newblock {\em Instrumental Variable Estimation for Duration Data}, pages
  111--148.
\newblock Springer Netherlands, Dordrecht.

\bibitem[Blundell and Powell, 2003]{BlundellPowell2003}
Blundell, R. and Powell, J.~L. (2003).
\newblock {\em Endogeneity in Nonparametric and Semiparametric Regression
  Models}, volume~2 of {\em Econometric Society Monographs}, page 312–357.
\newblock Cambridge University Press.

\bibitem[Blundell and Powell, 2004]{BlundellPowell2004}
Blundell, R.~W. and Powell, J.~L. (2004).
\newblock Endogeneity in semiparametric binary response models.
\newblock {\em The Review of Economic Studies}, 71(3):655--679.

\bibitem[Choi and O'Malley, 2017]{ChoiOMalley2017}
Choi, J. and O'Malley, A.~J. (2017).
\newblock Estimating the causal effect of treatment in observational studies
  with survival time endpoints and unmeasured confounding.
\newblock {\em J. R. Stat. Soc. Ser. C. Appl. Stat.}, 66(1):159--185.

\bibitem[Coviello et~al., 2015]{Coviello2015}
Coviello, D., Persico, N., and Ichino, A. (2015).
\newblock The inefficiency of worker time use.
\newblock {\em Journal of the European Economic Association}, 13(5):906--947.

\bibitem[Ganong and Noel, 2022]{GanongNoel2022}
Ganong, P. and Noel, P. (2022).
\newblock {Why do Borrowers Default on Mortgages?*}.
\newblock {\em The Quarterly Journal of Economics}, 138(2):1001--1065,
  https://academic.oup.com/qje/article-pdf/138/2/1001/49730071/qjac040.pdf.

\bibitem[Gore et~al., 2010]{Gore2010}
Gore, J.~L., Litwin, M.~S., Lai, J., Yano, E.~M., Madison, R., Setodji, C.,
  Adams, J.~L., Saigal, C.~S., and {Urologic Diseases in America Project}
  (2010).
\newblock Use of radical cystectomy for patients with invasive bladder cancer.
\newblock {\em J. Natl. Cancer Inst.}, 102(11):802--811.

\bibitem[Han and Hausman, 1990]{HanHausman1990}
Han, A. and Hausman, J.~A. (1990).
\newblock Flexible parametric estimation of duration and competing risk models.
\newblock {\em Journal of Applied Econometrics}, 5(1):1--28.

\bibitem[Hansen, 1982]{Hansen1982}
Hansen, L.~P. (1982).
\newblock Large sample properties of generalized method of moments estimators.
\newblock {\em Econometrica}, 50(4):1029--1054.

\bibitem[Hausman, 1975]{Hausman1975}
Hausman, J.~A. (1975).
\newblock An instrumental variable approach to full information estimators for
  linear and certain nonlinear econometric models.
\newblock {\em Econometrica}, 43(4):727--738.

\bibitem[Horvitz and Thompson, 1952]{HorvitzThompson1952}
Horvitz, D.~G. and Thompson, D.~J. (1952).
\newblock A generalization of sampling without replacement from a finite
  universe.
\newblock {\em Journal of the American Statistical Association},
  47(260):663--685.

\bibitem[Huang and Wang, 2000]{HuangWang2000}
Huang, Y. and Wang, C.~Y. (2000).
\newblock Cox regression with accurate covariates unascertainable: A
  nonparametric-correction approach.
\newblock {\em Journal of the American Statistical Association},
  95(452):1209--1219,
  https://www.tandfonline.com/doi/pdf/10.1080/01621459.2000.10474321.

\bibitem[Huang and Wang, 2006]{HuangWang2006}
Huang, Y. and Wang, C.~Y. (2006).
\newblock Errors-in-covariates effect on estimating functions: Additivity in
  limit and nonparametric correction.
\newblock {\em Statistica Sinica}, 16(3):861--881.

\bibitem[Huang and Wang, 2018]{HuangWang2018}
Huang, Y. and Wang, C.-Y. (2018).
\newblock Cox regression with dependent error in covariates.
\newblock {\em Biometrics}, 74(1):118--126.

\bibitem[Huber et~al., 1967]{huber1967}
Huber, P.~J. et~al. (1967).
\newblock The behavior of maximum likelihood estimates under nonstandard
  conditions.
\newblock In {\em Proceedings of the fifth Berkeley symposium on mathematical
  statistics and probability}, volume~1, pages 221--233. Berkeley, CA:
  University of California Press.

\bibitem[Imbens and Newey, 2009]{ImbensNewey2009}
Imbens, G.~W. and Newey, W.~K. (2009).
\newblock Identification and estimation of triangular simultaneous equations
  models without additivity.
\newblock {\em Econometrica}, 77(5):1481--1512.

\bibitem[Jenkins, 1995]{Jenkins1995}
Jenkins, S.~P. (1995).
\newblock Easy estimation methods for discrete-time duration models.
\newblock {\em Oxford Bulletin of Economics and Statistics}, 57(1):129--136,
  https://onlinelibrary.wiley.com/doi/pdf/10.1111/j.1468-0084.1995.tb00031.x.

\bibitem[Jenkins, 2004]{Jenkins2004}
Jenkins, S.~P. (2004).
\newblock {PGMHAZ8: Stata module to estimate discrete time (grouped data)
  proportional hazards models}.
\newblock Statistical Software Components, Boston College Department of
  Economics.

\bibitem[Kang and Schafer, 2007]{KangSchafer2007}
Kang, J. D.~Y. and Schafer, J.~L. (2007).
\newblock {Demystifying Double Robustness: A Comparison of Alternative
  Strategies for Estimating a Population Mean from Incomplete Data}.
\newblock {\em Statistical Science}, 22(4):523 -- 539.

\bibitem[Kelejian, 1971]{Kelejian1971}
Kelejian, H.~H. (1971).
\newblock Two-stage least squares and econometric systems linear in parameters
  but nonlinear in the endogenous variables.
\newblock {\em Journal of the American Statistical Association},
  66(334):373--374,
  https://amstat.tandfonline.com/doi/pdf/10.1080/01621459.1971.10482270.

\bibitem[Kianian et~al., 2021]{Kianian2021}
Kianian, B., Kim, J.~I., Fine, J.~P., and Peng, L. (2021).
\newblock Causal proportional hazards estimation with a binary instrumental
  variable.
\newblock {\em Stat. Sin.}, 31(2):673--699.

\bibitem[Kiefer, 1988]{Kiefer1988}
Kiefer, N.~M. (1988).
\newblock Economic duration data and hazard functions.
\newblock {\em Journal of Economic Literature}, 26(2):646--679.

\bibitem[Lambert et~al., 2007]{Lambert2007}
Lambert, P.~C., Thompson, J.~R., Weston, C.~L., and Dickman, P.~W. (2007).
\newblock Estimating and modeling the cure fraction in population-based cancer
  survival analysis.
\newblock {\em Biostatistics}, 8(3):576--594.

\bibitem[Li et~al., 2015]{Li2015}
Li, J., Fine, J., and Brookhart, A. (2015).
\newblock Instrumental variable additive hazards models.
\newblock {\em Biometrics}, 71(1):122--130.

\bibitem[Li and Ryan, 2004]{LiRyan2004}
Li, Y. and Ryan, L. (2004).
\newblock Survival analysis with heterogeneous covariate measurement error.
\newblock {\em Journal of the American Statistical Association},
  99(467):724--735.

\bibitem[Liebersohn and Rothstein, 2023]{LiebersohnRothstein2023}
Liebersohn, J. and Rothstein, J. (2023).
\newblock Household mobility and mortgage rate lock.
\newblock {\em Available at SSRN}.

\bibitem[Loeys and Goetghebeur, 2003]{LoeysGoetghebeur2003}
Loeys, T. and Goetghebeur, E. (2003).
\newblock A causal proportional hazards estimator for the effect of treatment
  actually received in a randomized trial with all-or-nothing compliance.
\newblock {\em Biometrics}, 59(1):100--105.

\bibitem[Loeys et~al., 2005]{Loeys2005}
Loeys, T., Goetghebeur, E., and Vandebosch, A. (2005).
\newblock Causal proportional hazards models and time-constant exposure in
  randomized clinical trials.
\newblock {\em Lifetime Data Anal.}, 11(4):435--449.

\bibitem[Martinussen et~al., 2017a]{Martinussen2019}
Martinussen, T., Nørbo~Sørensen, D., and Vansteelandt, S. (2017a).
\newblock {Instrumental variables estimation under a structural Cox model}.
\newblock {\em Biostatistics}, 20(1):65--79,
  https://academic.oup.com/biostatistics/article-pdf/20/1/65/27177041/kxx057\_supp.pdf.

\bibitem[Martinussen et~al., 2017b]{Martinussen2017}
Martinussen, T., Vansteelandt, S., Tchetgen~Tchetgen, E.~J., and Zucker, D.~M.
  (2017b).
\newblock Instrumental variables estimation of exposure effects on a
  time-to-event endpoint using structural cumulative survival models.
\newblock {\em Biometrics}, 73(4):1140--1149.

\bibitem[Newey, 1984]{Newey1984}
Newey, W.~K. (1984).
\newblock A method of moments interpretation of sequential estimators.
\newblock {\em Economics Letters}, 14(2):201--206.

\bibitem[Newey and McFadden, 1994]{NeweyMcFadden1994}
Newey, W.~K. and McFadden, D. (1994).
\newblock Large sample estimation and hypothesis testing.
\newblock volume~4 of {\em Handbook of Econometrics}, pages 2111--2245.
  Elsevier.

\bibitem[Norris, nd]{Norris}
Norris, J.~R. (n.d.).
\newblock Probability and measure {[Lecture Notes]}.
\newblock \url{https://www.statslab.cam.ac.uk/~james/Lectures/pm.pdf}.

\bibitem[Pakes and Pollard, 1989]{PakesPollard1989}
Pakes, A. and Pollard, D. (1989).
\newblock Simulation and the asymptotics of optimization estimators.
\newblock {\em Econometrica}, 57(5):1027--1057.

\bibitem[Palmer, 2013]{Palmer2013}
Palmer, C. (2013).
\newblock Why did so many subprime borrowers default during the crisis: Loose
  credit or plummeting prices? {[Job Market Paper, MIT Sloan]}. mit.

\bibitem[Palmer, 2023]{Palmer2023}
Palmer, C.~J. (2023).
\newblock An {IV} hazard model of loan default with an application to subprime
  mortgage cohorts.
\newblock Working paper, MIT Sloan.

\bibitem[Prentice, 1982]{Prentice1982}
Prentice, R.~L. (1982).
\newblock Covariate measurement errors and parameter estimation in a failure
  time regression model.
\newblock {\em Biometrika}, 69(2):331--342.

\bibitem[Prentice and Gloeckler, 1978]{PrenticeGloeckler1978}
Prentice, R.~L. and Gloeckler, L.~A. (1978).
\newblock Regression analysis of grouped survival data with application to
  breast cancer data.
\newblock {\em Biometrics}, 34(1):57--67.

\bibitem[Rivers and Vuong, 1988]{RiversVuong1988}
Rivers, D. and Vuong, Q.~H. (1988).
\newblock Limited information estimators and exogeneity tests for simultaneous
  probit models.
\newblock {\em Journal of Econometrics}, 39(3):347--366.

\bibitem[Robins et~al., 2000]{Robins2000}
Robins, J.~M., Hern{\'a}n, M.~A., and Brumback, B. (2000).
\newblock Marginal structural models and causal inference in epidemiology.
\newblock {\em Epidemiology}, 11(5):550--560.

\bibitem[Song and Wang, 2014]{SongWang2014}
Song, X. and Wang, C.-Y. (2014).
\newblock Proportional hazards model with covariate measurement error and
  instrumental variables.
\newblock {\em J. Am. Stat. Assoc.}, 109(504):1636--1646.

\bibitem[Tchetgen~Tchetgen et~al., 2015]{TchetgenTchetgen2015}
Tchetgen~Tchetgen, E.~J., Walter, S., Vansteelandt, S., Martinussen, T., and
  Glymour, M. (2015).
\newblock Instrumental variable estimation in a survival context.
\newblock {\em Epidemiology}, 26(3):402--410.

\bibitem[Terza, 2016]{Terza2016}
Terza, J.~V. (2016).
\newblock Simpler standard errors for two-stage optimization estimators.
\newblock {\em The Stata Journal}, 16(2):368--385,
  https://doi.org/10.1177/1536867X1601600206.

\bibitem[Terza et~al., 2008]{Terza2008}
Terza, J.~V., Basu, A., and Rathouz, P.~J. (2008).
\newblock Two-stage residual inclusion estimation: addressing endogeneity in
  health econometric modeling.
\newblock {\em J. Health Econ.}, 27(3):531--543.

\bibitem[Tsiatis, 1990]{Tsiatis1990}
Tsiatis, A.~A. (1990).
\newblock Estimating regression parameters using linear rank tests for censored
  data.
\newblock {\em The Annals of Statistics}, 18(1):354--372.

\bibitem[Vaart, 1998]{VanderVaart1998}
Vaart, A. W. V.~d. (1998).
\newblock {\em Asymptotic Statistics}.
\newblock Cambridge Series in Statistical and Probabilistic Mathematics.
  Cambridge University Press.

\bibitem[Wang et~al., 2022]{Wang2022}
Wang, L., {Tchetgen Tchetgen}, E., Martinussen, T., and Vansteelandt, S.
  (2022).
\newblock {IV} estimation of causal hazard ratio.
\newblock arXiv:1807.05313.

\bibitem[Wellner, 2010]{Wellner2010}
Wellner, J.~A. (2010).
\newblock M- and z-theorems; gmm and empirical likelihood wellner {[Lecture
  Notes]}.
\newblock
  \url{https://sites.stat.washington.edu/jaw/COURSES/580s/581/HO/zthm-v4.pdf}.

\bibitem[Wooldridge, 2010]{Wooldridge2010}
Wooldridge, J.~M. (2010).
\newblock {\em Econometric Analysis of Cross Section and Panel Data}.
\newblock MIT Press.

\bibitem[Wooldridge, 2015]{Wooldridge2015}
Wooldridge, J.~M. (2015).
\newblock Control function methods in applied econometrics.
\newblock {\em The Journal of Human Resources}, 50(2):420--445.

\bibitem[Ying et~al., 2019]{Ying2019}
Ying, A., Xu, R., and Murphy, J. (2019).
\newblock Two-stage residual inclusion for survival data and competing risks-an
  instrumental variable approach with application to {SEER-Medicare} linked
  data.
\newblock {\em Stat. Med.}, 38(10):1775--1801.

\end{thebibliography}
\newpage{}
\end{singlespace}

\appendix
\numberwithin{equation}{section}
\setcounter{equation}{0}
\setcounter{figure}{0} \renewcommand{\thefigure}{A\arabic{figure}}
\setcounter{table}{0} \renewcommand{\thetable}{A\arabic{table}}
\setcounter{assumption}{0} \renewcommand{\theassumption}{A\arabic{assumption}}
\begin{singlespace}

\section*{Appendix}
\end{singlespace}

\section{\label{sec:proof}Proof of Theorem \ref{thm:CAN}}

\subsection*{A Finite-Dimensional Z-Theorem}

I will first introduce a version of Huber's (1967; Theorem 3) Z-theorem
based on Wellner (2010). The point of using this Z-theorem (non-classical
GMM) is to allow the use of the Lindeberg-Feller CLT.

\renewcommand{\theassumption}{H\arabic{assumption}}
\begin{assumption}
$\hat{\theta}$ is the solution of $\hat{m}(\hat{\theta})\coloneqq\frac{1}{n}\stackrel[i=1]{n}{\sum}g_{i}(\hat{\theta})=o_{p}(n^{-\frac{1}{2}})$
and $\theta_{0}$ is the unique solution of $m(\theta)\coloneqq\mathbb{E}\left[g_{i}(\theta)\right]=0$.
\end{assumption}
\begin{assumption}
$\sqrt{n}\left(\hat{m}(\theta_{0})-m(\theta_{0})\right)\xrightarrow{d}D_{0}$,
where $D_{0}$ is some random variable that has a known distribution.
\end{assumption}
\begin{assumption}
For some norm $\left\Vert \cdot\right\Vert $ equivalent to the Euclidean
norm, for every sequence $\delta_{n}\rightarrow0$,
\[
\underset{\left\Vert \theta-\theta_{0}\right\Vert \leq\delta_{n}}{\sup}\frac{\left\Vert \sqrt{n}\left(\hat{m}(\theta)-m(\theta)\right)-\sqrt{n}\left(\hat{m}(\theta_{0})-m(\theta_{0})\right)\right\Vert }{1+\sqrt{n}\left\Vert \theta-\theta_{0}\right\Vert }\xrightarrow{p}0.
\]
\end{assumption}
\begin{assumption}
The function $m$ is Fr\'echet-differentiable at $\theta_{0}$ with
non-singular gradient
\[
G\coloneqq\nabla_{\mkern-4mu \theta}m(\theta)\equiv\left.\frac{\partial m(\theta)}{\partial\theta'}\right|_{\theta=\theta_{0}}.
\]
\end{assumption}
\begin{assumption}
The true finite-dimensional parameter $\theta_{0}$ is in the interior
of the parameter space.\footnote{This assumption is not explicitly stated in Wellner (2010), but I
include it here. The reason for assuming that $\theta_{0}$ is in
the interior of the parameter space is that, in Huber's (1967) Theorem
3, the Z-theorem only gives results for an arbitrary open strict subset
of the original parameter space.}
\end{assumption}
\begin{manualtheorem}{H1}

Suppose Assumptions H1-H5 hold. Then,
\[
\sqrt{n}(\hat{\theta}-\theta_{0})\xrightarrow{d}-G^{-1}D_{0}.
\]
In particular, if $D_{0}\sim N(0,\Omega)$, we have
\[
\sqrt{n}(\hat{\theta}-\theta_{0})\xrightarrow{d}N(0,G^{-1}\Omega(G^{-1})').
\]

\end{manualtheorem}

\subsection*{A Version of the Leibniz Integral Rule for Expectations}

Next, I introduce a special case of the measure theory variant of
the Leibniz Integral Rule\emph{.}
\begin{lem}
\label{lem:interchange}Let $X\in\mathcal{X}$ be a random variable
and $f\colon\mathbb{R}\times\mathcal{X}\rightarrow\mathbb{R}$ a function
such that $f(\theta,X)$ is integrable for all $\theta$ and $f$
is differentiable w.r.t.\ $\theta$. If there exists a random variable
$Z$ such that $\left|\frac{\partial}{\partial\theta}f(\theta,X)\right|\leq Z$
for all $\theta$ with $\mathbb{E}\left[Z\right]<\infty$, then

\[
\mathbb{E}\left[\frac{\partial}{\partial\theta}f(\theta,X)\right]=\frac{\partial}{\partial\theta}\mathbb{E}\left[f(\theta,X)\right].
\]
\end{lem}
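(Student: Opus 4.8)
The plan is to reduce the claim to an application of the Dominated Convergence Theorem (DCT), using the mean value theorem to produce an integrable envelope for the difference quotients. First I would rewrite the derivative of the expectation as a limit of difference quotients and pull the quotient inside by linearity of the expectation:
\[
\frac{\mathbb{E}\left[f(\theta+h,X)\right]-\mathbb{E}\left[f(\theta,X)\right]}{h}=\mathbb{E}\left[\frac{f(\theta+h,X)-f(\theta,X)}{h}\right].
\]
It then suffices to show that the limit as $h\rightarrow0$ of the right-hand side equals $\mathbb{E}\left[\frac{\partial}{\partial\theta}f(\theta,X)\right]$, i.e., that the limit and the expectation may be interchanged.

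Next I would establish the two ingredients DCT requires: pointwise convergence and uniform domination of the integrand. For each fixed realization of $X$, differentiability of $f$ in $\theta$ gives
\[
\frac{f(\theta+h,X)-f(\theta,X)}{h}\xrightarrow{h\rightarrow0}\frac{\partial}{\partial\theta}f(\theta,X).
\]
For the envelope, I would invoke the mean value theorem: since $\theta\mapsto f(\theta,X)$ is differentiable on $\mathbb{R}$, for each $X$ there is some $\tilde{\theta}$ strictly between $\theta$ and $\theta+h$ with
\[
\frac{f(\theta+h,X)-f(\theta,X)}{h}=\frac{\partial}{\partial\theta}f(\tilde{\theta},X),
\]
and the hypothesis $\left|\frac{\partial}{\partial\theta}f(\theta,X)\right|\leq Z$ \emph{for all} $\theta$ then bounds the difference quotient by the integrable random variable $Z$ uniformly in $h$.

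Finally, because DCT is phrased for sequences, I would fix an arbitrary sequence $h_{n}\rightarrow0$ and apply DCT to the quotients formed with $h_{n}$ (which converge pointwise to $\frac{\partial}{\partial\theta}f(\theta,X)$ and are dominated by $Z$), concluding that the limit along $(h_{n})$ equals $\mathbb{E}\left[\frac{\partial}{\partial\theta}f(\theta,X)\right]$. Since the limiting value is the same for every such sequence, the two-sided limit exists and equals this expression, which is the claim. The main obstacle is the domination step: securing a \emph{single} integrable envelope valid simultaneously for all small $h$. This is precisely where the mean value theorem and the uniform-in-$\theta$ bound $\left|\frac{\partial}{\partial\theta}f(\theta,X)\right|\leq Z$ are indispensable; with only a pointwise-in-$\theta$ bound the envelope argument would collapse, and DCT could not be applied.
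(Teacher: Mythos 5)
Your proposal is correct and is essentially the same argument the paper relies on: the paper simply states that the lemma follows from the Dominated Convergence Theorem and defers the details to Norris's Theorem 3.5.1, whose proof is exactly your mean-value-theorem envelope plus DCT along arbitrary sequences $h_{n}\rightarrow0$. You have merely written out in full what the paper cites.
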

\begin{proof}
This follows from a straightforward application of the Dominated Convergence
Theorem. For more detail, see the proof of Theorem 3.5.1 in Norris
(n.d.; lecture notes); Lemma \ref{lem:interchange} is a direct corollary
of Theorem 3.5.1.
\end{proof}

\subsection*{Main Proof}

This proof will use the notation and the model introduced in Section
\ref{subsec:theory-modelintro}.

\subsubsection*{Part 1 -- Verification of the Z-Theorem Conditions}

Given the auxiliary model, $u_{it}\indep z_{it}|v_{it}$ (Assumption
\ref{assu:condindep}) implies that $(x_{it},z_{it})\indep u_{it}|v_{it}$,
i.e., that $x_{it}$ is endogenous only through $v_{it}$. Using the
decomposition of $u_{it}$, we then have that $(x_{it},z_{it})\indep e_{it}|v_{it}$.
We also have that $(x_{it},z_{it})\indep e_{it}|p(v_{it})$; we can
equivalently condition on $p(v_{it})$ instead of $v_{it}$ because
$p(v_{it})$ depends only on $v_{it}$ and contains $v_{it}$ as one
of its elements.

Now, $e_{it}|p(v_{it})\equiv e_{it}|v_{it}$, and therefore $e_{it}|p(v_{it})\sim\mathrm{Gumbel}(0,1)$,
because $p(v_{it})$ depends only on $v_{it}$ and contains $v_{it}$
as one of its elements. Therefore, using the fact that $e_{it}|v_{it}\sim\mathrm{Gumbel}(0,1)$
(Assumption \ref{assu:CFspec}), we have that $e_{it}|p(v_{it})\sim\mathrm{Gumbel}(0,1)$
. We can then combine this with the previous result that $(x_{it},z_{it})\indep e_{it}|p(v_{it})$
to yield
\[
e_{it}|x_{it},z_{1it},p(v_{it})\sim\mathrm{Gumbel}(0,1).
\]

This means that the primary model is a cloglog model when the control
function terms $p(v_{it})$ are included alongside the original second-stage
regressors $x_{it}$ and $z_{1it}$. Therefore, this implies that
including the terms of $p(v_{it})$ as controls allows us to recover
an exogenous Prentice and Gloeckler (1978) model because the base
assumptions for it are satisfied (Assumptions \ref{assu:setup} and
\ref{assu:regularity} as well the primary model rank condition of
Assumption \ref{assu:rank}). Consequently, if $p(v_{it})$ were known,
we could just do that to get consistent estimates of $\beta_{1}$
and $\beta_{2}$, but we cannot because $v_{it}$ is unobserved. Instead,
if we have an estimate $\hat{\pi}$ of the auxiliary model parameters,
we can consider substituting $v_{it}$ with $\hat{v}_{it}\coloneqq x_{it}-z_{it}'\hat{\pi}$
when estimating. To see why, first substitute the unobserved variable
$v_{it}$ for the observed variables in $z_{it}$ and unknown parameter
vector $\pi$ using the auxiliary model. This gives us
\[
y_{it}=\mathbf{1}(\psi_{t}+\mathfrak{z}_{it}'\beta_{1}+\beta_{2}x_{it}+p(x_{it}-z_{it}'\pi)'\beta_{3}+e_{it}>0).
\]
This is a classic two-step Z-estimation setup: were it not for the
unknown parameter vector $\pi$, we would be able to directly estimate
the remaining parameters using MLE. We want to replace $\pi$ with
$\hat{\pi}$ and do QMLE instead. Well, as shown in Section \ref{subsec:theory-modelintro},
the moment functions can be stacked, so we end up with a one-step
Z-estimation problem. Therefore, we just need to apply a Z-theorem
to find sufficient conditions and get results; this proof will use
Theorem H1.

The first-stage moment conditions are just OLS moment conditions,
so they will not be discussed further -- the assumptions of Theorem
H1 trivially hold for the first stage. I will instead discuss how
the second-stage moment conditions satisfy these Z-theorem conditions.
Recall that $\theta\equiv(\pi',\psi',\beta')'$ and that the moment
conditions are
\[
\hat{m}(\theta)\coloneqq\left(\begin{array}{c}
\hat{m}_{1}(\theta)\\
\hat{m}_{2}(\theta)
\end{array}\right)\equiv\left(\begin{array}{c}
\frac{1}{n}\stackrel[i=1]{n}{\sum}g_{1i}(\theta)\\
\frac{1}{n}\stackrel[i=1]{n}{\sum}g_{2i}(\theta)
\end{array}\right)\equiv\frac{1}{n}\stackrel[i=1]{n}{\sum}g_{i}(\theta),
\]
where
\[
\hat{m}_{1}(\theta)\equiv\hat{m}_{1}(\pi)\coloneqq\frac{1}{n}\frac{\partial l_{1}}{\partial\pi}\equiv\frac{1}{n}\stackrel[i=1]{n}{\sum}\stackrel[t=1]{T}{\sum}z_{it}'v_{it}\equiv\frac{1}{n}\stackrel[i=1]{n}{\sum}\underset{g_{1i}(\theta)}{\underbrace{\stackrel[t=1]{T}{\sum}z_{it}'(x_{it}-z_{it}'\pi)}}
\]
and
\[
\hat{m}_{2}(\theta)\equiv\hat{m}_{2}(\psi,\beta,\pi)\coloneqq\frac{1}{n}\left(\begin{array}{c}
\frac{\partial l_{2}}{\partial\psi_{1}}\\
\vdots\\
\frac{\partial l_{2}}{\partial\psi_{T}}\\
\frac{\partial l_{2}}{\partial\beta_{1}}\\
\vdots\\
\frac{\partial l_{2}}{\partial\beta_{K}}
\end{array}\right)\equiv\frac{1}{n}\left(\begin{array}{c}
\frac{\partial}{\partial\psi_{1}}\stackrel[i=1]{n}{\sum}l_{2i}\\
\vdots\\
\frac{\partial}{\partial\psi_{T}}\stackrel[i=1]{n}{\sum}l_{2i}\\
\frac{\partial}{\partial\beta_{1}}\stackrel[i=1]{n}{\sum}l_{2i}\\
\vdots\\
\frac{\partial}{\partial\beta_{K}}\stackrel[i=1]{n}{\sum}l_{2i}
\end{array}\right)\equiv\frac{1}{n}\stackrel[i=1]{n}{\sum}\underset{\mathclap{\coloneqq g_{2i}(\theta)\equiv\frac{1}{T}\stackrel[t=1]{T}{\sum}g_{2it}(\theta)}}{\underbrace{\left(\begin{array}{c}
\frac{\partial l_{2i}}{\partial\psi_{1}}\\
\vdots\\
\frac{\partial l_{2i}}{\partial\psi_{T}}\\
\frac{\partial l_{2i}}{\partial\beta_{1}}\\
\vdots\\
\frac{\partial l_{2i}}{\partial\beta_{K}}
\end{array}\right)}},
\]
where $n$ is the number of entities (indexed by $i$), $T$ is the
number of time periods, and $K$ is the number of regressors. The
elements of $g_{2i}(\theta)$ are defined by
\[
\frac{\partial l_{2i}}{\partial\psi_{t}}=\begin{cases}
-h_{it} & t<s\\
\delta_{i}b_{it} & t=s\\
0 & t>s
\end{cases};\qquad\frac{\partial l_{2i}}{\partial\beta_{k}}=\mathfrak{\hat{\zeta}}_{is,k}\left(\delta_{i}b_{is}-\stackrel[j=1]{s-1}{\sum}h_{ij}\right),
\]
where $\delta_{i}$ is the censoring indicator, $t$ represents a
time period, $s$ is the time period of survival (with subscript $i$
suppressed for clarity), $\mathfrak{\hat{\zeta}}_{it,k}$ is the $k$-th
element of $\mathfrak{\hat{\zeta}}_{it}$ (i.e., the observation for
the $k$-th regressor, excluding the time fixed effects, for entity
$i$ and time $t$$)$, and where
\[
b_{is}\coloneqq\frac{h_{it}e^{-h_{it}}}{1-e^{-h_{it}}};\qquad h_{it}\coloneqq e^{\psi_{t}+\mathfrak{\zeta}_{it}'\beta}.
\]
$\hat{m}_{1}(\theta)$ is a score function, but $\hat{m}_{2}(\theta)$
is technically a quasi-score function because $p(\hat{v}_{it})$ is
a vector of generated regressors. $g_{1i}(\theta)$ is the entity-specific
first-stage score contribution,$g_{2i}(\theta)$ is the entity-specific
second-stage quasi-score contribution, and $g_{2it}(\theta)$ is the
observation-specific second-stage quasi-score contribution; $g_{i}(\theta)\coloneqq(g_{1i}(\theta)',g_{2i}(\theta)')'$.
Recall that $\hat{\theta}\equiv(\hat{\pi}',\hat{\psi}',\hat{\beta}')'$
is the solution of
\[
\hat{m}(\hat{\theta})=0.
\]

Before continuing on to prove the assumptions of Theorem H1, first
note that Assumption \ref{assu:regularity} implies that $\mathbb{E}\left[|g_{i}(\theta)|\right]<\infty$.
This is a consequence of the instrument rank condition (for the first-stage
components) and the Lindeberg condition (for the second-stage components).
This is a useful fact that will used more than once.

To prove Assumption H1, we must prove the uniqueness of $\theta_{0}$
and $\hat{\theta}$ and that $m(\theta_{0})=0$. The former just directly
follows from the monotonicity of $m$ and $\hat{m}$, but proving
$m(\theta_{0})=0$ is a little more involved. Trivially, $m_{1}(\theta_{0})=0$
because the first stage is just OLS. For $m_{2}(\theta_{0})$, we
just need to (1) note that the base conditions for Prentice and Gloeckler
(1978) are satisfied and (2) prove that you can still interchange
the integral and partial-derivative operators contained within $m_{2}(\theta)\equiv\mathbb{E}\left[g_{2i}(\theta)\right]$.
For the former, we have Assumptions \ref{assu:setup} and \ref{assu:regularity}
as well the primary model rank condition of Assumption \ref{assu:rank}.
As for the latter, this can easily be done by verifying the conditions
for Lemma \ref{lem:interchange} by noting that $l_{2i}$ is integrable,
$g_{2i}(\theta)$ exists, and $|g_{2i}(\theta)|$ is bounded over
the parameter space by a random variable with finite expectation (because
$\mathbb{E}\left[|g_{i}(\theta)|\right]<\infty$).\footnote{\label{fn:CIMEproof}Similarly, it is easy to show that an analog
of the information matrix equality (see Section \ref{subsubsec:theory-effandCIME})
holds here.}

Assumption H2 holds because the Lindeberg condition (on the individual-specific
second-stage quasi-score contributions) allows the Lindeberg-Feller
CLT to be used. In particular, the Lindeberg-Feller CLT implies that
$\hat{m}_{2}(\theta_{0})$ is asymptotically normal and has known
asymptotic mean and asymptotic variance.\footnote{Prentice and Gloeckler (1978) also use the Lindeberg condition in
order to use the Lindeberg-Feller CLT: ``The basic requirement is
that the individual information matrices not be too disparate among
study subjects in order that the central limit theorem apply to the
distribution of {[}the score{]}.''} (The asymptotic mean is $m(\theta_{0})=0$.)

For Assumption H3, a sufficient condition is that $\hat{m}$ be uniformly
convergent in probability, which can be shown using a uniform weak
law of large numbers (Theorem 12.1 in Wooldridge, 2010). The compactness
of the parameter space (together with the Borel-measurability and
continuity of $\hat{m}$) means that we just require $|\hat{m}|$
to be bounded over the parameter space by a random variable with finite
expectation, but this is trivially implied by the fact that $\mathbb{E}\left[|g_{i}(\theta)|\right]<\infty$.

Assumption H4 simply requires that the expected Hessians of the first
and second stages,\footnote{Here, OLS is cast as a special case of MLE for ease of exposition.}
evaluated at the true parameter, must be invertible -- this is implied
by the rank conditions. (This is shown in Part 2A.) Assumption H5
is directly assumed as part of Assumption \ref{assu:regularity}.

Satisfying all assumptions of Theorem H1, the properties of $\hat{\theta}$
follow.\footnote{Because the variance follows the sandwich formula (this is because
Theorem H1 is based on the delta method), it is obvious that the standard
results on sequential (classical) GMM estimation (Newey, 1984; Newey
\& McFadden, 1994) also carry over, despite the use of non-classical
GMM.} All that is left is to derive the specific formulae for the asymptotic
variance of $\hat{\theta}$ (Part 2A) and a consistent estimator for
the asymptotic variance of $\hat{\theta}$ (Part 2B).

\subsubsection*{Part 2A -- The Asymptotic Variance}

From applying Theorem H1 in Part 1, we know that\footnote{An aside: for this model, it is true that $\nabla_{\mkern-4mu \theta}\,\mathbb{E}\left[g_{i}(\theta_{0})\right]=\mathbb{E}\left[\nabla_{\mkern-4mu \theta}\,g_{i}(\theta_{0})\right]$.
Using Lemma \ref{lem:interchange}, it is easy to show that we can
interchange the integral and second-partial-derivative operators.}
\[
\sqrt{n}(\hat{\theta}-\theta_{0})\xrightarrow{d}N(0,G^{-1}\Omega(G^{-1})'),
\]
where
\[
G\equiv\mathbb{E}\left[\nabla_{\mkern-4mu \theta}\,g_{i}(\theta_{0})\right];\qquad\Omega\equiv\mathbb{E}\left[g_{i}(\theta_{0})g_{i}(\theta_{0})'\right].
\]

For completeness, note that the variance-covariance matrix here, $V\equiv G^{-1}\Omega(G^{-1})'$,
can be derived from the general GMM variance formula. Because we are
doing just-identified GMM, $G$ is square, and therefore
\begin{align*}
(G'WG)^{-1}G'W\varOmega W'G(G'W'G)^{-1} & =G^{-1}W^{-1}(G^{-1})'G'W\varOmega W'GG^{-1}(W^{-1})'(G^{-1})'\\
 & =G^{-1}\Omega(G^{-1})'.
\end{align*}
The LHS is the general formula, where $W$ is the positive semi-definite
GMM weight matrix.\footnote{This general formula is a standard result that can be obtained from
a Z-theorem by defining $m(\theta)$ in a certain way. Also, notice
that the choice of weight matrix for just-identified GMM is arbitrary:
it does not affect the end result. This should be unsurprising --
$W$ weights the individual moment functions in $\hat{m}(\theta)$
in the sense that, if it is not possible to set all of them to 0,
it represents which moment functions are prioritized. In just-identified
GMM, there are exactly as many instruments as individual moment functions,
so it is possible to set the entirety of $\hat{m}(\theta)$ to 0.}

Since we already know what $g_{i}(\theta)$ looks like, we know what
$\Omega$ looks like; however, it may not be immediately obvious to
the reader how to differentiate $g_{i}(\theta)$ to derive the structure
of
\[
G\equiv\left[\begin{array}{c|c}
G_{11} & G_{12}\\
\hline G_{21} & G_{22}
\end{array}\right].
\]
Thankfully, we can utilize the equivalent cloglog representation of
the model to express$g_{2i}(\theta)$ in a easier-to-handle form by
simply applying standard results for cloglog. 

$G\equiv\nabla_{\mkern-4mu \theta}m(\theta)$ is the gradient of $m(\theta)$
w.r.t.\ $\theta$, so $G_{ab}$ is the gradient of the $a$-th-stage
moment functions w.r.t.\ the $b$-th-stage parameters. $G_{11}$
and $G_{22}$ can be directly obtained as standard results from the
OLS and cloglog models. The form of $G_{11}$ is trivial. For $G_{22}$,
note that the observation-specific quasi-score contribution $g_{2it}(\theta)$
has the form of a cloglog model score contribution:
\[
g_{2it}(\theta)=\begin{cases}
\frac{e^{\xi_{it}'\gamma-e^{\xi_{it}'\gamma}}}{1-e^{-e^{\xi_{it}'\gamma}}}\xi_{it} & y_{it}=1\\
-e^{\xi_{it}'\gamma}\xi_{it} & y_{it}=0
\end{cases}.
\]
We can simply differentiate this to derive $G_{22}$. For $G_{12}$,
recall that we are dealing with a triangular system -- the second-stage
parameters do not show up in the first stage, so $G_{12}$ is trivially
just a matrix of zeroes. Finally, we can obtain the form of $G_{21}$
using the chain rule and the previous result for $G_{22}$ -- simply
differentiate the second-stage quasi-score w.r.t.\ $p(v)'\beta_{3}$,
differentiate $p(v)'\beta_{3}$ w.r.t.\ $\pi$, and take their product.
(This is why $\mkern3mu\mathchar'26\mkern-12mu d_{it}$ is just $d_{it}$
multiplied by the factor $\sum_{q=1}^{Q}-\beta_{3q}v_{it}^{q-1}$.)

Proving that $G$ is invertible is straightforward. Since $G$ is
a block lower triangular matrix, we know that $G$ has a rank of at
least its diagonal sub-matrices: $\mathrm{rank}(G)\geq\mathrm{rank}(\mathbb{E}\left[-z_{it}z_{it}'\right])+\mathrm{rank}(\mathbb{E}\left[d_{it}\xi_{it}\xi_{it}'\right])$.
$\mathbb{E}\left[-z_{it}z_{it}'\right]$ has full rank due to the
instrument rank condition; $\mathbb{E}\left[d_{it}\xi_{it}\xi_{it}'\right]$
has full rank because $\mathbb{E}\left[\xi_{it}\xi_{it}'\right]$
has full rank (the primary model rank condition) iff $\mathbb{E}\left[d_{it}\xi_{it}\xi_{it}'\right]$
does. This last claim can be easily proven by noting that $\xi_{it}\xi_{it}'$
is of quadratic form and that $d_{it}<0$.

First, let $v\in\mathbb{R}^{L_{\Xi}},v\neq0$ be given. $\xi_{it}\xi_{it}'$
is positive semi-definite because of its quadratic form, which implies
that
\[
v'\mathbb{E}\left[d_{it}\xi_{it}\xi_{it}'\right]v\leq0;\qquad v'\mathbb{E}\left[\xi_{it}\xi_{it}'\right]v\geq0,
\]
i.e., both matrices are semi-definite. (The fact that $d_{it}<0$
will cause the former to have an opposite sign from the latter, but
this is not the point.) Now, notice that $v'\mathbb{E}\left[d_{it}\xi_{it}\xi_{it}'\right]v\equiv\mathbb{E}\left[d_{it}(v'\xi_{it})^{2}\right]$
and $v'\mathbb{E}\left[\xi_{it}\xi_{it}'\right]v\equiv\mathbb{E}\left[(v'\xi_{it})^{2}\right].$
We have that

\[
\mathbb{E}\left[d_{it}(v'\xi_{it})^{2}\right]\neq0\iff\Pr((v'\xi_{it})^{2}\neq0)>0\iff\mathbb{E}\left[(v'\xi_{it})^{2}\right]\neq0
\]
since $(v'\xi_{it})^{2}\geq0$ and $d_{it}<0$. That is, because $(v'\xi_{it})^{2}$
is weakly positive, the only way for either expectation to be zero
for some $v\in\mathbb{R}^{L_{\Xi}}$ is if $(v'\xi_{it})^{2}=0$ with
probability one. But, combining this fact with the previous, this
means that
\[
v'\mathbb{E}\left[d_{it}\xi_{it}\xi_{it}'\right]v<0\iff v'\mathbb{E}\left[\xi_{it}\xi_{it}'\right]v>0,
\]
i.e., $\mathbb{E}\left[d_{it}\xi_{it}\xi_{it}'\right]$ is definite
(and thus invertible) iff $\mathbb{E}\left[\xi_{it}\xi_{it}'\right]$
is.

\subsubsection*{Part 2B -- A Consistent Estimator for the Asymptotic Variance}

By Slutsky, a consistent estimator of the asymptotic variance $V\equiv G^{-1}\Omega(G^{-1})'$
is $\widehat{V}\equiv\widehat{G}^{-1}\widehat{\Omega}(\widehat{G}^{-1})'$,
where $\widehat{G}$ is any consistent estimator of $G$ and $\widehat{\Omega}$
is any consistent estimator of $\Omega$. $\widehat{\Omega}$ is typically
just any standard GMM estimate of $\Omega$ given by statistical software,
such as when homoskedastic, ``robust'', or clustered standard errors
are requested. For $\widehat{G}$, we will use the sample version
of $G$ corresponding to $\hat{m}(\theta)$, which is guaranteed to
be consistent by Slutsky due to the continuity of $\widehat{G}$:
\[
\widehat{G}\equiv\left[\begin{array}{c|c}
\widehat{G}_{11} & \widehat{G}_{12}\\
\hline \widehat{G}_{21} & \widehat{G}_{22}
\end{array}\right].
\]
 Trivially, $\widehat{G}_{11}$ is the OLS variance for the first
stage, and $\widehat{G}_{12}=\boldsymbol{0}$. For $\widehat{G}_{21}$
and $\widehat{G}_{22}$, we can use the equivalent cloglog representation
of the Prentice and Gloeckler (1978) model and standard results for
the cloglog model; the structures of $\widehat{G}_{21}$ and $\widehat{G}_{22}$
in Theorem \ref{thm:CAN} immediately follow.

We can show that sample versions of the invertibility results from
Part 2A still hold. Using the same argument as before, since $\widehat{G}$
is block lower triangular, it is guaranteed to be invertible if $-Z'Z$
and $\widehat{\Xi}'\widehat{D}\widehat{\Xi}$ are. However, note that
$\widehat{\Xi}'\widehat{D}\widehat{\Xi}$ has the same rank as $\widehat{\Xi}'\widehat{\Xi}$
because $\widehat{D}$ is diagonal with non-zero diagonal elements.
Thus, the invertibility of $\widehat{G}$ is implied by the invertibility
of $Z'Z$ and $\widehat{\Xi}'\widehat{\Xi}$. Trivially, the instrument
and primary rank conditions imply that these will be invertible (and
therefore that the estimator proposed in this paper exists) with probability
approaching one as the sample size increases. For $\widehat{\Xi}'\widehat{\Xi}$,
this is because $\Xi'\Xi$ will be invertible with probability approaching
one and $\widehat{\Xi}'\widehat{\Xi}\xrightarrow{p}\Xi'\Xi$.\qed

\newpage{}

\section{\label{sec:primer}An Introduction to Grouped-Time Hazard Models}

This section will use different notation from the rest of the paper
that is closer to standard textbook notation.

First, let us consider a continuous-time single-event model. (Multi-event
models are outside the scope of this paper.) I will show how \emph{interval-censoring}
transforms a continuous-time model into a nominally discrete-time
model: a grouped-time model. (A discrete-time model can represent
a process that is fundamentally discrete-time -- e.g., the time variable
could be the number of cycles of some process, but it can also nominally
represent a grouped-time model.)

Let $T_{i}$, where $T_{i}\geq0$, denote the \textit{continuous time-to-event}
random variable, whose values are the time-to-event for a given entity
$i$ (individual, etc.). $T_{i}$ has PDF $f_{i}(\cdot)$ -- the
\textit{event density} -- and CDF $F_{i}(\cdot)$ -- the \textit{lifetime
distribution function}. The \textit{survival function} $S_{i}$ is
defined by
\[
S_{i}(t)\coloneqq\Pr(T_{i}>t)\equiv1-F_{i}(t)
\]
and is the probability of surviving up to time $t$. Assuming that
$S_{i}$ is everywhere continuously differentiable -- i.e., that
$f_{i}(\cdot)$ exists everywhere and is continuous, the derivative
of $S_{i}$ is the \textit{hazard function} $\lambda_{i}$, which
is defined by
\[
\begin{aligned}\lambda_{i}(t) & \coloneqq\underset{\eta\rightarrow0}{\lim}\frac{\Pr(t\leq T_{i}<t+\eta|T_{i}\geq t)}{\eta}\equiv\frac{-S_{i}'(t)}{S_{i}(t)}\equiv\frac{f_{i}(t)}{S_{i}(t)}\\
 & \equiv\underset{\eta\rightarrow0}{\lim}\mathbb{E}\left[\frac{\boldsymbol{1}(t\leq T_{i}<t+\eta)}{\eta}|T_{i}\geq t\right]
\end{aligned}
\]
and is the instantaneous rate of occurrence of the event, conditional
on the event not having occurred yet.\footnote{The last identity can be derived in a few steps. First, $\Pr(t\leq T_{i}<t+\eta|T_{i}\geq t)\equiv\mathbb{E}\left[\boldsymbol{1}(t\leq T_{i}<t+\eta)|T_{i}\geq t\right]\equiv\mathbb{E}\left[\boldsymbol{1}(T_{i}<t+\eta)|T_{i}\geq t\right]-\mathbb{E}\left[\boldsymbol{1}(T_{i}<t)|T_{i}\geq t\right]$.
Second, note that $\lim_{\eta\rightarrow0}\frac{\mathbb{E}\left[\boldsymbol{1}(T_{i}<t+\eta)|T_{i}\geq t\right]-\mathbb{E}\left[\boldsymbol{1}(T_{i}<t)|T_{i}\geq t\right]}{\eta}$
is a derivative; since $S_{i}$ is bounded and is everywhere differentiable
by assumption (note that the continuity and boundedness of $S_{i}$
imply integrability), by Lemma \ref{lem:interchange}, we can interchange
expectation and differentiation. Finally, we just need to rearrange
and then apply the Dominated Convergence Theorem to interchange the
limit and expectation operators.}

Note that the fact that $\lambda_{i}(t)\equiv\frac{-S_{i}'(t)}{S_{i}(t)}$
implies that
\[
\lambda_{i}(t)\equiv-\frac{\mathrm{d}}{\mathrm{d}t}\log S_{i}(t),
\]
which, after introducing the boundary condition $S_{i}(0)=1$ (i.e.,
we define $t=0$ to be some timepoint where no events have happened
yet), we can directly integrate by applying the First Fundamental
Theorem of Calculus to get
\[
S_{i}(t)\equiv\exp\Biggl(-\underset{\coloneqq\varLambda_{i}(t)}{\underbrace{\stackrel[0]{t}{\intop}\lambda_{i}(u)\,\mathrm{d}u}}\Biggr),
\]
where $\varLambda_{i}(t)$ is known as the \textit{integrated hazard}.
This expression for $S_{i}(t)$ is very intuitive -- the contribution
of the instantaneous hazard rate exponentially diminishes with time
because entities can ``fail'' before time $t$, after which the
hazard rate no longer matters for them. Let us now introduce the parameter
vector of interest $\beta$; suppose that $f_{i}$, $S_{i}$, and
$\lambda_{i}$ all depend on $\beta$. Let us also introduce right
censoring: rather than $T_{i}$, we observe the censored survival
time variable $Y_{i}=\min(T_{i},c_{i})$, where $c_{i}$ is the potential
censoring time. ($c_{i}$ is just when we no longer observe $T_{i}$,
i.e., the ``start'' of right censoring.) Let $\delta_{i}$ be an
indicator or whether entity $i$ gets right-censored, defined by
\[
\delta_{i}\coloneqq\begin{cases}
1 & T_{i}\leq c_{i}\\
0 & T_{i}>c_{i}
\end{cases}.
\]
Under mild assumptions (independence across individuals and non-random
censoring), the likelihood function is
\[
\begin{aligned}L_{n}(\beta;Y_{i},c_{i}) & =\stackrel[i=1]{n}{\prod}\biggl(\underset{f_{i}(Y_{i}|\delta_{i}=1)}{\underbrace{\frac{f_{i}(Y_{i};\beta)}{1-S_{i}(c_{i};\beta)}}}\times\underset{\Pr(\delta_{i}=1)}{\underbrace{(1-S_{i}(c_{i};\beta))\vphantom{\frac{f_{i}(Y_{i};\beta)}{1-S_{i}(c_{i};\beta)}}}}\biggr)^{\delta_{i}}\biggl(\underset{\mathclap{f_{i}(Y_{i}=0|\delta_{i}=0)}}{\underbrace{1\vphantom{\frac{f_{i}(Y_{i};\beta)}{1-S_{i}(c_{i};\beta)}}}}\times\underset{\mathrlap{\Pr(\delta_{i}=0)}}{\underbrace{S_{i}(c_{i};\beta)\vphantom{\frac{f_{i}(Y_{i};\beta)}{1-S_{i}(c_{i};\beta)}}}}\biggr)^{1-\delta_{i}}\\
 & =\stackrel[i=1]{n}{\prod}f_{i}(Y_{i};\beta)^{\delta_{i}}S_{i}(c_{i};\beta)^{1-\delta_{i}}
\end{aligned}
,
\]
where $n$ is the number of observations. The first $f_{i}$ refers
to a PDF and comes from the formula for a truncated distribution (since
$\delta_{i}=1$ means $T_{i}\leq c_{i}$). On the other hand, the
second $f_{i}$ instead refers to a PMF, and the fact that $f_{i}(Y_{i}=0|\delta_{i}=0)=\Pr(Y_{i}=0|\delta_{i}=0)=1$
directly results from the definitions of $Y_{i}$ and $\delta_{i}$.

Now let us instead consider a grouped-time scenario. We do not directly
observe times; instead, the time axis is partitioned into $R$ intervals
and we can only observe that an event falls into an interval, which
is known as ``interval censoring''. A grouped-time model involves
a special case of interval censoring where said intervals are identical.\footnote{That is, a grouped-time setting is when the data is split up into
the same time periods for all entities, which is typical of panel
data in economics. In contrast, a general interval-censored model
is one that has random interval censoring, which is more mathematically
complex. One classic example where such a model applies is medical
checkup data (since medical checkups do not occur at the same time
for all patients).} Let $A_{\tau}$ denote the $\tau$-th time interval (i.e., ``time
period $\tau$'') and let $a_{\cdot}$ denote a specific timepoint:
$A_{\tau}=[a_{\tau-1},a_{\tau})$, $\tau=1,\ldots,R$, with $a_{0}=0$
and $a_{K}=\infty$. Define the random variable $T_{i}^{*}$ to be
the discrete analog of $T_{i}$ -- it represents the time period
in which the event happens for entity $i$; similarly define $Y_{i}^{*}$.
Let $f_{i}^{*}$ denote the PMF of $T_{i}^{*}$, which is defined
by $f_{i}^{*}(\tau)\coloneqq\Pr(T_{i}\in A_{\tau})$; $f_{i}^{*}(\tau)$
is the integral of the PDF $f_{i}(\cdot)$ over time period $\tau$.
Obviously, the discrete CDF $F_{i}^{*}$ at time period $\tau$ is
just the sum of $f_{i}^{*}(\tau)$ over all time periods up to $\tau$,
and the corresponding survival function satisfies $S_{i}^{*}(\tau)\equiv1-F_{i}^{*}(\tau)$.
The discrete hazard function is the rate of failure in a given time
period, i.e., the conditional probability of failure:
\[
\lambda_{i}^{*}(\tau)\coloneqq\Pr(T_{i}\in A_{\tau}|T_{i}\geq a_{\tau-1})\equiv\frac{S_{i}^{*}(\tau-1)-S_{i}^{*}(\tau)}{S_{i}^{*}(\tau-1)}\equiv\frac{f_{i}^{*}(\tau)}{S_{i}^{*}(\tau-1)}.
\]
By the definition of conditional probability, it is clear that
\[
S_{i}^{*}(\tau)\equiv\stackrel[u=1]{\tau}{\prod}(1-\lambda_{i}^{*}(u)).
\]
Let $c_{i}^{*}$ and $\delta_{i}^{*}$ be the censoring time period
and censoring indicator respectively. (In panel data, $c_{i}^{*}$
is just the last time period for each entity $i$.) Similarly to before,
with right censoring, the likelihood function in the grouped-time
model is
\[
\begin{aligned}L_{n}^{*}(\beta;Y_{i}^{*},c_{i}^{*}) & =\stackrel[i=1]{n}{\prod}\biggl(\underset{f_{i}^{*}(Y_{i}^{*}|\delta_{i}^{*}=1)}{\underbrace{\frac{f_{i}^{*}(Y_{i}^{*};\beta)}{1-S_{i}^{*}(c_{i}^{*};\beta)}}}\times\underset{\Pr(\delta_{i}^{*}=1)}{\underbrace{(1-S_{i}^{*}(c_{i}^{*};\beta))\vphantom{\frac{f_{i}^{*}(Y_{i}^{*};\beta)}{1-S_{i}^{*}(c_{i}^{*};\beta)}}}}\biggr)^{\delta_{i}^{*}}\biggl(\underset{\mathclap{f_{i}^{*}(Y_{i}^{*}=0|\delta_{i}^{*}=0)}}{\underbrace{1\vphantom{\frac{f_{i}^{*}(Y_{i}^{*};\beta)}{1-S_{i}^{*}(c_{i}^{*};\beta)}}}}\times\underset{\mathrlap{\Pr(\delta_{i}^{*}=0)}}{\underbrace{S_{i}^{*}(c_{i}^{*};\beta)\vphantom{\frac{f_{i}^{*}(Y_{i}^{*};\beta)}{1-S_{i}^{*}(c_{i}^{*};\beta)}}}}\biggr)^{1-\delta_{i}^{*}}\\
 & =\stackrel[i=1]{n}{\prod}f_{i}^{*}(Y_{i}^{*};\beta)^{\delta_{i}^{*}}S_{i}^{*}(c_{i}^{*};\beta)^{1-\delta_{i}^{*}}
\end{aligned}
.
\]

It is typical to assume a basic hazard model where $f_{i}(t;\beta)=f(t|x_{i}'\beta)$
for some function $f$, i.e., where $f_{i}(t;\beta)$ depends on $i$
only through the \textit{single index} $x_{i}'\beta$, a linear combination
(of the values of) the regressors for entity $i$. $\lambda$, $F$,
and $S$ can be defined similarly. However, note that doing so implicitly
imposes the assumption that all entities are capable of experiencing
the event: if $F(\cdot)$ is a CDF, then $S(t)$ must tend to 0 as
$t$ goes to infinity.

Models that instead allow for some fraction of the population to be
incapable of experiencing the event (either ex-ante or becoming so)
are known as\emph{ cure models}. A \textit{non-mixture cure model}
models ``being cured'' as a competing (final) event in a competing
risks model: the same population faces multiple hazards (potentially
corresponding to different events or the same event). That is, the
focus is on the process of ``curing'' -- such models can be used
to model the rate at which patients get cured before dying, for example.
In contrast, a \emph{mixture cure model} models the cured and uncured
sub-populations as facing separate hazards. (The hazard rate could
be zero for the ``cured'' population, e.g., the hazard rate for
people being immune to some disease.) That is, the focus is on what
happens after ``curing'' -- this could be used to model the rate
at which patients die when they are cured of some disease, for example.
See Lambert et al.\ (2007) for an introduction to cure models.

Some care must be taken with the interpretation of an imposed basic
hazard model when the true underlying model is a cure model if some
entities truly never experience the event (rather than the occurrence
of the event being unobserved). Firstly, the hazard function and the
survival function are well-defined but not the time-to-event, which
will be an ``improper random variable'' (since its value is undefined
for some entities). Though, all three are well-defined if we condition
on the event eventually happening. Secondly, there is probable bias
and inconsistency due to misspecification even if cure status is completely
independent of all other variables -- moreover, in general, it is
not true that the estimate for $\beta$ converges to $\beta$ multiplied
by the uncured proportion.

\endgroup

\end{document}